\pdfoutput=1

\documentclass[10pt,letterpaper]{article}

\setlength{\textwidth}{6in}
\setlength{\evensidemargin}{0in}
\setlength{\oddsidemargin}{0in}

%

\usepackage{etex}

\usepackage{grffile}

\usepackage{tikz}
\usetikzlibrary{arrows}
\usetikzlibrary{plotmarks}
\usetikzlibrary{decorations.pathreplacing}

\usepackage{url, subfigure, graphicx, latexsym, multicol, multirow}
\usepackage{enumitem, xspace}

\graphicspath{{./}{./old-figures/}}

\usepackage{amsmath, amssymb, calc, printlen}

\usepackage{booktabs, ctable, footnote}
\usepackage[algoruled,vlined,algo2e]{algorithm2e}

\usepackage{mathptmx}

\usepackage{listings, verbatim}

\usepackage{amsthm}

\usepackage{hyperref}

\hypersetup{
  pdfauthor={Jinzhong Niu, Simon Parsons},
  pdftitle={Maximizing Matching in Double-sided Auctions},
  pdfstartview=FitH,
  breaklinks=true,
  bookmarksopen=true,
  bookmarksnumbered=true,
}

%

\newtheorem{definition}{Definition}
\newtheorem{lemma}{Lemma}
\newtheorem{corollary}{Corollary}
\newtheorem{theorem}{Theorem}

\newcommand{\mset}[1]{\textit{\textbf{#1}}\xspace}
\newcommand{\mseta}{\mset{A}}
\newcommand{\msetb}{\mset{B}}
\newcommand{\msetm}{\mset{M}}

\newcommand{\fn}[1]{\footnote{\scriptsize #1}}

\makeatletter
\@ifundefined{nstyle}{%

\usepackage{xifthen}

\newcommand{\acro}[1]{{\smaller \uppercase{#1}}}
\newcommand{\cda}{\acro{cda}\xspace}
\newcommand{\ch}{\acro{ch}\xspace}

\newcommand{\jcat}{\acro{jcat}\xspace}
\newcommand{\ttt}{\acro{tt}\xspace}
\newcommand{\tps}{\acro{ps}\xspace}
\newcommand{\tzic}{\acro{zi-c}\xspace}
\newcommand{\tgd}{\acro{gd}\xspace}

\newcommand{\ar}[1]{{\textsf{\smaller \uppercase{#1}}}\xspace}

\newcommand{\ars}[1]{{\smaller #1}\xspace}
\newcommand{\mv}{\ar{mv}}

\newcommand{\me}{\ar{me}}
\newcommand{\mt}{\ar{m$\theta$}}
\newcommand{\pair}[2]{\langle #1, #2\rangle}

\newcommand{\furl}[1]{{\footnotesize\url{#1}}\xspace}

\setlength{\heavyrulewidth }{0.1em}

\newcolumntype{+}{>{\global\let\currentrowstyle\relax}}
\newcolumntype{-}{>{\currentrowstyle}}

\DeclareMathOperator*{\argmax}{arg\,max}

%
\newlength{\clapwidth}
\clapwidth=20pt
\def\clap#1{\settowidth{\clapwidth}{#1} \hbox to 0.5\clapwidth{\hss#1\hss}}

\def\mathclap{\mathpalette\mathclapinternal}

\def\mathclapinternal#1#2{%
\clap{$\mathsurround=0pt#1{#2}$}}
}{%
}
\makeatother

\everymath{\displaystyle}


%

\begin{document}


\title{Maximizing Matching in Double-sided Auctions\fn{This is a full-length version of \cite{niu-aamas13-mv}.}}

\author{Jinzhong Niu\\
Department of Computer Science\\
The City College, The City University of New York\\
160 Convent Avenue, New York, NY 10031\\
\texttt{jniu@ccny.cuny.edu}
\and Simon Parsons\\
Department of Computer and Information Science\\
Brooklyn College, The City University of New York\\
2900 Bedford Avenue, Brooklyn, NY 11210\\
\texttt{parsons@sci.brooklyn.cuny.edu}
}

\maketitle


\begin{abstract}

In this paper, we introduce a novel, non-recursive, maximal matching algorithm for double auctions, which aims to maximize the amount of commodities to be traded. It differs from the usual equilibrium matching, which clears a market at the equilibrium price. We compare the two algorithms through experimental analyses, showing that the maximal matching algorithm is favored in scenarios where trading volume is a priority and that it may possibly improve allocative efficiency over equilibrium matching as well. A parameterized algorithm that incorporates both maximal matching and equilibrium matching as special cases is also presented to allow flexible control on how much to trade in a double auction.

\end{abstract}

\section{Introduction}
\label{sec:intro}

Double auctions are auctions where multiple buyers and multiple sellers trade~\cite{friedman-93-book-da.survey}. These auctions provide effective price discovery and high transaction throughput and, as a result, are widely used in financial markets \cite{schartz:francioni:weber:06:equity-trader-course} and automated control \cite{clearwater96mbc}, as well as in solving, for example, environmental problems \cite{rich-aer98-matching}. Clearly, matching buyers and sellers is a major issue in double auctions and providing effective matching mechanisms is the subject of this paper.

We start by assuming that each trader in an auction has a limit price, called its \emph{private value}, below which sellers will not sell and above which buyers will not buy. From the private values we can compute the \emph{demand}, the quantity of a commodity that buyers are prepared to purchase at each possible price, and the \emph{supply}, the quantity that sellers are prepared to sell at each possible price. If $price$ is plotted as a function of $quantity$ following the convention in economics, the \emph{demand curve} slopes downward and the \emph{supply curve} slopes upward, as shown in Figure~\ref{fig:ds-underlying}. When goods are indivisible, the curves have a stairwise shape.

Typically, there is some price at which the quantity demanded is equal to the quantity supplied. Graphically, this is the intersection of the supply and demand curves. The price is called the \emph{equilibrium price}, and the corresponding quantity of commodity is called the \emph{equilibrium quantity}. The most common solution to creating a matching algorithm is to clear the market at the price where the supply and the demand equal. This is called \emph{equilibrium matching}, and we denote a double auction with equilibrium matching as \me following the scheme in \cite{niu-jaamas10-cat}. The equilibrium price and equilibrium quantity are denoted as $p_0$ and $q_0$ respectively in Figure~\ref{fig:ds-underlying}. Traders whose private value is no less competitive than the equilibrium price are called \emph{intra-marginal} whereas the rest of the traders are called \emph{extra-marginal}. The intra-marginal traders correspond to the shaded area in  Figure~\ref{fig:ds-underlying}.

Now, the private values of traders are not publicly known in most practical scenarios. What is known instead are the prices that traders offer (called \emph{bids} for buyers, \emph{asks} for sellers, and \emph{shouts} when we don't distinguish), and self-interested traders will make offers away from their private values in order to make a profit. The prices and quantities that are offered also make a set of supply and demand curves, called the \emph{apparent} supply and demand curves, in contrast to the \emph{underlying} supply and demand based on traders' private values. Figure~\ref{fig:ds-apparent} shows that the apparent supply curve shifts up compared to the underlying supply curve in Figure~\ref{fig:ds-underlying}, while the apparent demand curve shifts down. Thus if we compute the apparent equilibrium, it may be some distance from the true equilibrium.

The challenge for developing matching algorithms is to have them function well given that they take the apparent supply and demand curves as input. In this paper, we introduce a novel matching algorithm that can potentially increase the quantity of goods to trade given the same supply and demand schedules, and analyze its performance in various scenarios.

\begin{figure*}[tb]
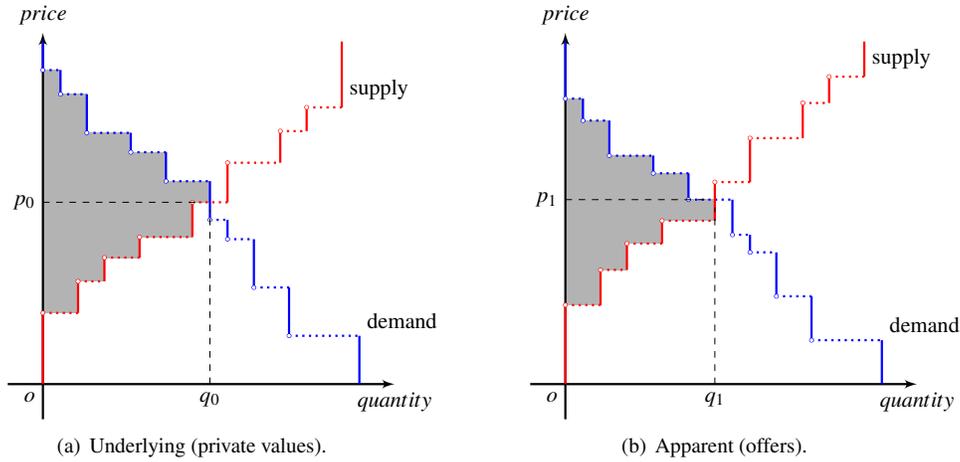

  \begin{center}
  \mbox{
    \subfigure[Underlying (private values).]{\label{fig:ds-underlying}\resizebox{.45\linewidth}{!}{\input{figs/ds-underlying-setup}\colorlet{d}{blue}
\colorlet{s}{red}
\colorlet{f}{white}
\colorlet{shade}{black!30}

\pgfdeclarelayer{background layer}
\pgfdeclarelayer{foreground layer}
\pgfsetlayers{background layer,main,foreground layer}

\begin{tikzpicture}[scale=.14]


\draw[line width=1pt, -latex'] (-4,0) -- (40,0) node[below] {$quantity$};
\draw[line width=1pt, -latex'] (0,-4) -- (0,40) node[above] {$price$};
\node[below left] (zero) at (0,0) {$o$};


\newcount\demand;
\demand=0;
\newcount\bidcount;
\bidcount=0;

\node (bid\the\bidcount) at (\demand,39) {};
\foreach \p/\q in {32.5/2, 30/3, 26/5, 24/4, 21/5, 17/2, 15/3, 10/4, 5/8} {
    \pgfmathparse{\p*\bidrate};
    \let\p\pgfmathresult;
    \node (xbid\the\bidcount) at (\demand,\p) {};
    \global\advance\bidcount by 1;
    \node (ybid\the\bidcount) at (\demand,\p) {};
    \global\advance\demand by \q;
    \node (bid\the\bidcount) at (\demand,\p) {};

    \ifthenelse{\dimtest{\p pt}{>}{\eqmy pt}}{%
        \begin{pgfonlayer}{background layer}
        \shade[top color=shade, bottom color=shade] (ybid\the\bidcount) rectangle (bid\the\bidcount |- 0,\eqmy);
        \end{pgfonlayer}
    }{%
    }
}

\node (xbid\the\bidcount) at (\demand,0) {};

\foreach \bidi in {0, ..., \bidcount} {
    \draw[opaque,line width=1pt,color=d] (bid\bidi.center) -- (xbid\bidi.center);
}
\foreach \bidi in {1, ..., \bidcount} {
    \draw[dotted,opaque,line width=1pt,color=d] (ybid\bidi.center) -- (bid\bidi.center);
    \draw plot[mark=o, mark options={color=d}, scale=3] coordinates {(ybid\bidi)};
    \draw plot[mark=*, mark options={color=f}, scale=2] coordinates {(ybid\bidi)};
}

\node[above right] at (bid\the\bidcount) {demand};


\newcount\supply;
\supply=0;
\newcount\askcount;
\askcount=0;

\node (ask\the\askcount) at (\supply,0) {};
\foreach \p/\q in {9/4, 13/3, 16/4, 18.6/6, 23/4, 28/6, 32/3, 35/4} {
    \pgfmathparse{\p*\askrate};
    \let\p\pgfmathresult;
    \node (xask\the\askcount) at (\supply,\p) {};
    \global\advance\askcount by 1;
    \node (yask\the\askcount) at (\supply,\p) {};
    \global\advance\supply by \q;
    \node (ask\the\askcount) at (\supply,\p) {};

    \ifthenelse{\dimtest{\p pt}{<}{\eqmy pt}}{%
        \begin{pgfonlayer}{background layer}
        \shade[top color=shade, bottom color=shade] (yask\the\askcount) rectangle (ask\the\askcount |- 0,\eqmy);
        \end{pgfonlayer}
    }{%
    }
}
\node (xask\the\askcount) at (\supply,39) {};

\foreach \aski in {0, ..., \askcount} {
    \draw[opaque,line width=1pt,color=s] (ask\aski.center) -- (xask\aski.center);
}

\foreach \aski in {1, ..., \askcount} {
    \draw[dotted,opaque,line width=1pt,color=s] (yask\aski.center) -- (ask\aski.center);
    \draw plot[mark=o, mark options={color=s}, scale=3] coordinates {(yask\aski)};
    \draw plot[mark=*, mark options={color=f}, scale=2] coordinates {(yask\aski)};
}

\node[above right] at (ask\the\askcount) {supply};


\node[left] (p0) at (0,\eqmy) {\eqmp};
\node[below] (q0) at (\eqmx,0) {\eqmq};
\node (eqm) at (\eqmx, \eqmy) {};

\draw[dashed, line width=0.25pt] (p0) -- (eqm) {};
\draw[dashed, line width=0.25pt] (q0) -- (eqm) {};

\end{tikzpicture}}}
    \subfigure[Apparent (offers).]{\label{fig:ds-apparent}\resizebox{.45\linewidth}{!}{\input{figs/ds-apparent-setup}\colorlet{d}{blue}
\colorlet{s}{red}
\colorlet{f}{white}
\colorlet{shade}{black!30}

\pgfdeclarelayer{background layer}
\pgfdeclarelayer{foreground layer}
\pgfsetlayers{background layer,main,foreground layer}

\begin{tikzpicture}[scale=.14]


\draw[line width=1pt, -latex'] (-4,0) -- (40,0) node[below] {$quantity$};
\draw[line width=1pt, -latex'] (0,-4) -- (0,40) node[above] {$price$};
\node[below left] (zero) at (0,0) {$o$};


\newcount\demand;
\demand=0;
\newcount\bidcount;
\bidcount=0;

\node (bid\the\bidcount) at (\demand,39) {};
\foreach \p/\q in {32.5/2, 30/3, 26/5, 24/4, 21/5, 17/2, 15/3, 10/4, 5/8} {
    \pgfmathparse{\p*\bidrate};
    \let\p\pgfmathresult;
    \node (xbid\the\bidcount) at (\demand,\p) {};
    \global\advance\bidcount by 1;
    \node (ybid\the\bidcount) at (\demand,\p) {};
    \global\advance\demand by \q;
    \node (bid\the\bidcount) at (\demand,\p) {};

    \ifthenelse{\dimtest{\p pt}{>}{\eqmy pt}}{%
        \begin{pgfonlayer}{background layer}
        \shade[top color=shade, bottom color=shade] (ybid\the\bidcount) rectangle (bid\the\bidcount |- 0,\eqmy);
        \end{pgfonlayer}
    }{%
    }
}

\node (xbid\the\bidcount) at (\demand,0) {};

\foreach \bidi in {0, ..., \bidcount} {
    \draw[opaque,line width=1pt,color=d] (bid\bidi.center) -- (xbid\bidi.center);
}
\foreach \bidi in {1, ..., \bidcount} {
    \draw[dotted,opaque,line width=1pt,color=d] (ybid\bidi.center) -- (bid\bidi.center);
    \draw plot[mark=o, mark options={color=d}, scale=3] coordinates {(ybid\bidi)};
    \draw plot[mark=*, mark options={color=f}, scale=2] coordinates {(ybid\bidi)};
}

\node[above right] at (bid\the\bidcount) {demand};


\newcount\supply;
\supply=0;
\newcount\askcount;
\askcount=0;

\node (ask\the\askcount) at (\supply,0) {};
\foreach \p/\q in {9/4, 13/3, 16/4, 18.6/6, 23/4, 28/6, 32/3, 35/4} {
    \pgfmathparse{\p*\askrate};
    \let\p\pgfmathresult;
    \node (xask\the\askcount) at (\supply,\p) {};
    \global\advance\askcount by 1;
    \node (yask\the\askcount) at (\supply,\p) {};
    \global\advance\supply by \q;
    \node (ask\the\askcount) at (\supply,\p) {};

    \ifthenelse{\dimtest{\p pt}{<}{\eqmy pt}}{%
        \begin{pgfonlayer}{background layer}
        \shade[top color=shade, bottom color=shade] (yask\the\askcount) rectangle (ask\the\askcount |- 0,\eqmy);
        \end{pgfonlayer}
    }{%
    }
}
\node (xask\the\askcount) at (\supply,39) {};

\foreach \aski in {0, ..., \askcount} {
    \draw[opaque,line width=1pt,color=s] (ask\aski.center) -- (xask\aski.center);
}

\foreach \aski in {1, ..., \askcount} {
    \draw[dotted,opaque,line width=1pt,color=s] (yask\aski.center) -- (ask\aski.center);
    \draw plot[mark=o, mark options={color=s}, scale=3] coordinates {(yask\aski)};
    \draw plot[mark=*, mark options={color=f}, scale=2] coordinates {(yask\aski)};
}

\node[above right] at (ask\the\askcount) {supply};


\node[left] (p0) at (0,\eqmy) {\eqmp};
\node[below] (q0) at (\eqmx,0) {\eqmq};
\node (eqm) at (\eqmx, \eqmy) {};

\draw[dashed, line width=0.25pt] (p0) -- (eqm) {};
\draw[dashed, line width=0.25pt] (q0) -- (eqm) {};

\end{tikzpicture}}}
  }
  \caption{Demand and supply schedules. Compared to the underlying supply and demand, the apparent supply shifts up and the apparent demand shifts down respectively as traders tend to offer higher prices to sell or lower prices to buy than their private values.}
  \label{fig:ds}
  \end{center}
\end{figure*}
%

\section{Background}
\label{sec:intro:me}

In a \me auction, shouts are matched when the market is cleared if and only if they are intra-marginal. If we denote the supply and the demand at price $p$ as $S(p)$ and $D(p)$, the trading volume with \me is:
\begin{equation}
Q_\ars{me} = \max_p\min\left[S(p), D(p)\right]
\label{equ:equilibrium-matching-quantity}
\end{equation}
and if all transactions are made at the price at which $Q_\ars{me}$ is achieved, what is called \emph{uniform pricing}, the transaction price can be defined as
\begin{equation}
P_\ars{me} = \argmax_p\min\left[S(p), D(p)\right]
\label{equ:equilibrium-matching-price}
\end{equation}
According to Wurman \textit{et al.} \cite{wurman-dss98-flexible.das}, $P_\ars{me}$ may also be represented by the $m\,th$ highest price among shouts or the $n\,th$ lowest price among shouts if there are respectively $m$ asks and $n$ bids in the market.

Prior work in experimental economics has shown that \me is efficient in common types of double auction. This result holds when the traders are human, and when they are software agents, even agents with little intelligence~\cite{cason-el92-call.market.eff,gode-sunder-93-jpe-zi,smith-62-jpe-competitive.market.behavior}. The notion of efficiency here is \emph{allocative efficiency}, denoted as $E_a$, a measure of how much social welfare is obtained through the auction. The \emph{equilibrium profit}, $P_e$, of an auction is\fn{For simplicity, we assume here that a trader has the same private value for each of the multiple units of commodity it wants to buy or sell. It may not be the case in reality.}
\begin{equation}
\label{equ:theoretical-profit}
P_{e}=\sum_{i}{|v_{i}-p_{0}| \cdot q_{i}}
\end{equation}
for all intra-marginal traders, where $p_{0}$ is the equilibrium price and $v_i$ is the private value of trader $i$ who can trade up to $q_i$ units of commodity at $p_{0}$ without a loss. This is the maximum total profit that could be gained in the auction. Graphically, $P_e$ is the size of the shaded area in Figure~\ref{fig:ds-underlying}. The \emph{actual overall profit}, $P_a$, of an auction is
\begin{equation}
\label{equ:actual-profit}
P_{a}=\sum_{j}{|v_{j}-p_{j}| \cdot q_{j}}
\end{equation}
where $p_j$ and $q_j$ are respectively the transaction price and quantity of a trade completed by trader $j$ and $v_j$ is the private value of trader $j$, where $j$ ranges over all traders who actually trade in the auction. $P_{a}$ is the counterpart of $P_{e}$ in Figure~\ref{fig:ds-apparent}.
Given these values, $E_a$ is the proportion of the equilibrium profit that is achieved in practice, $P_{a}/P_{e}$.
%

Allocative efficiency is one of the most common metrics for auctions, but there are other metrics that may carry more weight in certain scenarios. \emph{Trading volume} is one such metric. This measures the total amount of commodity transferred from sellers to buyers in an auction, and in markets where traders are charged a flat fee for each transaction it is clearly advantageous for the operators of those markets to maximize trading volume. (Although this may be at odds with the best interests of the traders, who benefit the most if allocative efficiency is maximized.) We use the term \emph{maximal-volume matching}, to describe markets that maximize trading volume.

With \me, all matched asks are priced no higher than the matched bids, and given this constraint \me actually realizes the maximal trading volume that is possible, as indicated in (\ref{equ:equilibrium-matching-quantity}) and as discussed in the use of a double auction to set pre-open prices on the Australian Stock Exchange \cite{ComertonForde-jfm06-call.auction}. Without this constraint on prices, there is the potential to further increase the trading volume. The idea is simply not to match between intra-marginal asks and intra-marginal bids as \me does, but to match extra-marginal asks with intra-marginal bids that are priced no lower than the asks, and match extra-marginal bids with intra-marginal asks that are priced no higher than the bids. This can potentially double the trading volume realized by \me.

This idea is not new. Indeed, we are aware of two pieces of prior work~\cite{rich-aer98-matching,zhao-ai2010-mm} on this topic. However neither of the algorithms proposed before are both computationally efficient and achieve the highest allocative efficiency possible while realizing the maximal trading volume. Rich \textit{et al.}~\cite{rich-aer98-matching} illustrated a scheme to do the matching, but depending upon the shape of the supply and demand curves in the market, the set of matching shouts that their algorithm produces may exclude more competitive shouts and include less competitive ones from the same side, failing on the property of \emph{fairness} that we will define below. Zhao \textit{et al.}~\cite{zhao-ai2010-mm}, on the other hand, presented an algorithm that correctly produces the set of matching shouts, but their algorithm is both computationally inefficient and obscure. Our main contribution is to introduce a computationally efficient, maximal-volume matching algorithm, but first we provide a formal model of the double auction market and use it to analyze the matching problem.

\section{Notation and Properties}
\label{sec:matching.set}


We denote the set of $n$ bids as $\msetb$ and the set of $m$ asks denoted as $\mseta$. For any $b$ in $\msetb$, $p(b)$ represents the price offered in the bid, and $q(b)$ represents the quantity of goods. Similarly we use $p(a)$ and $q(a)$ to represent the price and quantity in the ask.
For simplicity, we assume that all shouts are single-unit, e.g., $q(s)=1$, noting that a multi-unit shout can always be split into multiple single-unit ones with the same price, and we only consider the issue of matching at the moment that the auction clears rather than considering how supply and demand change over time.

We start with the idea of a matching set. This defines matches between bids and asks, such that each is matched at most once.
\begin{definition}
\label{def:matching-set}
Given a set of bids, $\msetb$, and a set of asks, $\mseta$, a \emph{matching set} between $\msetb$ and $\mseta$, denoted as $\msetm$ or more specifically $\msetm_{\msetb, \mseta}$, is a set of bid-ask pairs where: (1) for any $\pair{b}{a}$ in $\msetm$, $b$ is in $\msetb$, $a$ is in $\mseta$, $q(b)=q(a)$, and $p(b)\geq p(a)$; and (2) for any $b$ in $\msetb$ or any $a$ in $\mseta$, it appears at most in one of the bid-ask pairs in $\msetm$.
\end{definition}
The set of the bids that appear in $\msetm$ is denoted as $\msetb_\msetm$, and the set of the asks that appear in $\msetm$ is denoted as $\mseta_\msetm$.

%

From Definition~\ref{def:matching-set}, we know the number of bid-ask pairs in $\msetm_{\msetb, \mseta}$ equals the number of bids in $\msetb_\msetm$ and the number of asks in $\mseta_\msetm$, i.e.,
\[
|\msetm_{\msetb, \mseta}|=|\msetb_\msetm|=|\mseta_\msetm|.
\]
The trading volume that $\msetm_{\msetb, \mseta}$ achieves is denoted as
\[
\|\msetm_{\msetb,\mseta}\| = \|\msetb_\msetm\| = \|\mseta_\msetm\| = \sum_{\mathclap{b\in \msetb_\msetm}}{q(b)} = \sum_{\mathclap{a\in \mseta_\msetm}}{q(a)}.
\]
Since every shout offers to buy or sell one unit of goods, we have
\[
\|\msetm_{\msetb,\mseta}\|=|\msetm_{\msetb, \mseta}|.
\]
Matching sets are important because we can consider a \emph{matching algorithm} as determining how a market will clear by generating a matching set. So market clearing can be equated to generating a matching set.
%
%
\begin{lemma}
\label{lemma:demand-supply}
In a matching set, $\msetm_{\msetb,\mseta}$, the total quantity of goods that are offered to buy at any given price or higher is equal to or higher than the total quantity of goods that are offered to sell at the given price or higher; and the total quantity of goods that are offered to sell at any given price or lower is equal to or higher than the total quantity of goods that are offered to buy at the given price or lower.
\begin{align}
\sum_{\mathclap{b\in\msetb_\msetm,p(b)\geq p}}{q(b)} &\geq \sum_{\mathclap{a\in\mseta_\msetm,p(a)\geq p}}{q(a)} \label{equ:demand-higher-above} \\
\sum_{\mathclap{a\in\mseta_\msetm,p(a)\leq p}}{q(a)} &\geq \sum_{\mathclap{b\in\msetb_\msetm,p(b)\leq p}}{q(b)} \label{equ:supply-higher-below}
\end{align}
\end{lemma}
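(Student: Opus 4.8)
The plan is to exploit the bijective structure of the matching set together with the ordering constraint $p(b)\geq p(a)$ imposed on every matched pair. By part~(2) of Definition~\ref{def:matching-set}, each matched bid and each matched ask belongs to exactly one pair in $\msetm_{\msetb,\mseta}$, so any sum indexed by $\msetb_\msetm$ or by $\mseta_\msetm$ can be rewritten equivalently as a sum indexed by the pairs $\pair{b}{a}\in\msetm_{\msetb,\mseta}$ themselves. This reindexing over pairs is the device that will let the two sides of each inequality be compared term by term.

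For inequality~(\ref{equ:demand-higher-above}) I would fix a price $p$ and start from its right-hand side. Rewriting it as a sum over pairs and using $q(a)=q(b)$ from part~(1) of the definition, I express $\sum_{a\in\mseta_\msetm,\,p(a)\geq p} q(a)$ as $\sum_{\pair{b}{a}\in\msetm,\,p(a)\geq p} q(b)$. The key observation is then that within any matched pair $p(b)\geq p(a)$, so the condition $p(a)\geq p$ forces $p(b)\geq p$; hence the set of pairs satisfying $p(a)\geq p$ is contained in the set of pairs satisfying $p(b)\geq p$. Since all quantities are nonnegative, enlarging the index set in this way can only increase the sum, and the resulting $\sum_{\pair{b}{a}\in\msetm,\,p(b)\geq p} q(b)$ is exactly the left-hand side $\sum_{b\in\msetb_\msetm,\,p(b)\geq p} q(b)$. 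This chain of one equality (from $q(a)=q(b)$), one set inclusion (from $p(b)\geq p(a)$), and one recovery of the pair-indexed sum as a bid-indexed sum gives~(\ref{equ:demand-higher-above}).

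Inequality~(\ref{equ:supply-higher-below}) then follows by the symmetric argument with the inequalities reversed: starting from $\sum_{b\in\msetb_\msetm,\,p(b)\leq p} q(b)$, I reindex over pairs and substitute $q(b)=q(a)$, then note that $p(b)\leq p$ together with $p(a)\leq p(b)$ forces $p(a)\leq p$, so the pairs with $p(b)\leq p$ form a subset of the pairs with $p(a)\leq p$, and the sum can again only grow when passed to the larger index set, recovering $\sum_{a\in\mseta_\msetm,\,p(a)\leq p} q(a)$.

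I do not expect any genuine analytic difficulty here; the entire content is the monotone coupling between each matched ask and its weakly higher-priced partner bid. The only step that requires care is the reindexing itself: I must be sure that summing over matched bids (or asks) and summing over pairs genuinely agree, which is precisely what part~(2) of Definition~\ref{def:matching-set} guarantees by making the pairing a bijection between $\msetb_\msetm$ and $\mseta_\msetm$. It is worth noting that the single-unit assumption $q(s)=1$ is not actually needed for this argument, since the per-pair equality $q(a)=q(b)$ already does all the work; under that assumption the sums simply reduce to counts of shouts and the inclusions of index sets become statements about cardinalities.
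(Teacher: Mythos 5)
Your proof is correct, and it relates to the paper's in an interesting way: it matches the paper on one inequality and takes a genuinely different route on the other. For (\ref{equ:demand-higher-above}), the paper merely asserts that ``there should always be enough bids that are priced no lower than a certain price to match asks that are priced in the same range''; your pair-reindexing --- using part (2) of Definition~\ref{def:matching-set} to turn shout-indexed sums into pair-indexed sums, then the inclusion $\{\pair{b}{a}\in\msetm : p(a)\geq p\}\subseteq\{\pair{b}{a}\in\msetm : p(b)\geq p\}$ forced by $p(b)\geq p(a)$, with nonnegative quantities --- is exactly the rigorous content behind that assertion, so here you follow the paper's idea but make explicit a step it leaves informal. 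For (\ref{equ:supply-higher-below}), however, the paper does not argue symmetrically: it introduces the strict-inequality variant (\ref{equ:demand-higher-above-properly}) and derives (\ref{equ:supply-higher-below}) by complementation, via the identities (\ref{equ:total-matched-demand}) and (\ref{equ:total-matched-supply}) and the fact that $\|\msetb_\msetm\|=\|\mseta_\msetm\|$. Your direct symmetric argument (pairs with $p(b)\leq p$ are contained in pairs with $p(a)\leq p$) buys simplicity --- no strict-versus-weak bookkeeping and no appeal to volume conservation --- whereas the paper's complementation route emphasizes that the two inequalities are two faces of a single counting fact once total matched demand equals total matched supply. Your closing remarks are also accurate: part (2) of the definition is precisely what makes the pairing a bijection between $\msetb_\msetm$ and $\mseta_\msetm$, so the reindexing is legitimate, and neither proof needs the single-unit assumption, since the per-pair equality $q(b)=q(a)$ does all the work.
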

\begin{proof}
In any matching set there should always be enough bids that are priced no lower than a certain price to match asks that are priced in the same range, which is exactly (\ref{equ:demand-higher-above}). It is also true with the price range that excludes the given price itself, i.e.,
\begin{align}
\sum_{\mathclap{b\in\msetb_\msetm,p(b)> p}}{q(b)} &\geq \sum_{\mathclap{a\in\mseta_\msetm,p(a)> p}}{q(a)} \label{equ:demand-higher-above-properly}
\end{align}
Note that the only differences between (\ref{equ:demand-higher-above}) and (\ref{equ:demand-higher-above-properly}) are in the bottom parts of $\textstyle\sum$s, `$\geq$' in the former vs. `$>$' in the latter.

As, for any given $p$, the sum of the left-hand part of (\ref{equ:demand-higher-above-properly}) and the right-hand part of (\ref{equ:supply-higher-below}) equals $\|\msetb_\msetm\|$, i.e.,
\begin{equation}
\sum_{\mathclap{b\in\msetb_\msetm,p(b)> p}}{q(b)} + \sum_{\mathclap{b\in\msetb_\msetm,p(b)\leq p}}{q(b)} = \sum_{\mathclap{b\in\msetb_\msetm}}{q(b)} = \|\msetb_\msetm\| \label{equ:total-matched-demand}
\end{equation}
and the sum of the right-hand part of (\ref{equ:demand-higher-above-properly}) and the left-hand part of (\ref{equ:supply-higher-below}) equals $\|\mseta_\msetm\|$, i.e.,
\begin{equation}
\sum_{\mathclap{a\in\mseta_\msetm,p(a)> p}}{q(a)} + \sum_{\mathclap{a\in\mseta_\msetm,p(a)\leq p}}{q(a)} = \sum_{\mathclap{a\in\mseta_\msetm}}{q(a)} = \|\mseta_\msetm\| \label{equ:total-matched-supply}
\end{equation}
it is easy to derive from (\ref{equ:demand-higher-above-properly}) that (\ref{equ:supply-higher-below}) is also true.
\end{proof}
\begin{definition}
\label{def:fair}
A matching set between $\msetb$ and $\mseta$, $\msetm$, is \emph{fair} if (1) for any $b$ and $b'$ in $\msetb$ and $p(b)<p(b')$, if $b$ is in $\msetb_\msetm$, then $b'$ is in $\msetb_\msetm$ as well; and (2) for any $a$ and $a'$ in $\mseta$ and $p(a)>p(a')$, if $a$ is in $\mseta_\msetm$, then $a'$ is in $\mseta_\msetm$ as well.
\end{definition}
This says, roughly speaking, that a fair matching set includes the most competitive bids and asks no matter how much trading is defined by the matching set. Fairness is thus a desirable property of
the matching sets generated by a matching algorithm.
\begin{lemma}
\label{lemma:fair}
If $\msetm$ is a matching set between $\msetb$ and $\mseta$, then there exists a fair matching set between $\msetb$ and $\mseta$, $\mset{M'}$, such that $\msetm$ and $\mset{M'}$ achieve the same trading volume, i.e., $\|\msetm\|=\|\mset{M'}\|$.
\end{lemma}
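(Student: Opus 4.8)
The plan is to construct $\mset{M'}$ explicitly from the most competitive shouts and then verify that it is simultaneously valid and fair, invoking Lemma~\ref{lemma:demand-supply} to guarantee the pairing can be carried out. Let $k=\|\msetm\|=|\msetm|$ be the common trading volume (the two are equal since every shout has quantity $1$). Because fairness in the sense of Definition~\ref{def:fair} forces the matched bids to be the $k$ highest-priced bids of $\msetb$ and the matched asks to be the $k$ lowest-priced asks of $\mseta$, I would simply define $\mset{B'}\subseteq\msetb$ to be the $k$ most competitive (highest-priced) bids and $\mset{A'}\subseteq\mseta$ to be the $k$ most competitive (lowest-priced) asks, breaking ties arbitrarily. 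Any matching set built on exactly these two sets is fair by construction, so the entire content of the lemma reduces to showing that $\mset{B'}$ and $\mset{A'}$ can be paired into a \emph{valid} matching set of size $k$.

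To pair them, index $\mset{B'}=\{b_1,\dots,b_k\}$ and $\mset{A'}=\{a_1,\dots,a_k\}$ both in non-increasing price order and match $b_i$ with $a_i$. This is a valid matching set (Definition~\ref{def:matching-set}) exactly when $p(b_i)\geq p(a_i)$ for every $i$, since the quantity and single-use conditions hold automatically. I would reduce this rank-by-rank dominance to a single threshold condition: for every price $p$,
\[
|\{b\in\mset{B'}: p(b)\geq p\}| \;\geq\; |\{a\in\mset{A'}: p(a)\geq p\}|.
\]
Indeed, taking $p=p(a_i)$ the descending order gives that $a_1,\dots,a_i$ all lie at or above $p(a_i)$, so the right-hand count is at least $i$; the inequality then forces at least $i$ bids of $\mset{B'}$ to lie at or above $p(a_i)$, and since $b_1,\dots,b_k$ are in descending order this yields $p(b_i)\geq p(a_i)$, as required.

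It remains to establish the threshold inequality, which I would do by inserting the original matching set $\msetm$ as an intermediary and applying Lemma~\ref{lemma:demand-supply} to it. Let $N(p)$ denote the total number of bids in $\msetb$ priced at or above $p$. Since $\mset{B'}$ greedily collects the $k$ highest-priced bids, the number of its members at or above $p$ equals $\min(N(p),k)$, which dominates the count for the arbitrary $k$-subset $\msetb_\msetm$; a symmetric greedy/majorization argument shows that $\mset{A'}$, collecting the $k$ lowest-priced asks, has at most as many members at or above $p$ as $\mseta_\msetm$ does. Chaining these two comparisons with inequality~(\ref{equ:demand-higher-above}) applied to $\msetm$ (with all quantities equal to $1$) gives
\[
|\{b\in\mset{B'}: p(b)\geq p\}| \geq |\{b\in\msetb_\msetm: p(b)\geq p\}| \geq |\{a\in\mseta_\msetm: p(a)\geq p\}| \geq |\{a\in\mset{A'}: p(a)\geq p\}|,
\]
which is precisely the threshold condition. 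The resulting $\mset{M'}$ then has $k$ pairs, so $\|\mset{M'}\|=\|\msetm\|$, and it is fair. I expect the main obstacle to be exactly this feasibility step, i.e.\ justifying the threshold inequality through the two greedy comparisons so that Lemma~\ref{lemma:demand-supply} can be brought to bear; the fairness and volume claims are immediate once the pairing is known to exist.
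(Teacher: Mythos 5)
Your proof is correct, but it takes a genuinely different route from the paper's. The paper proves Lemma~\ref{lemma:fair} by a local exchange argument: start with $\mset{M'}=\msetm$, and whenever fairness fails for some pair $\pair{b}{a}$ --- i.e., some unmatched $b'$ has $p(b')>p(b)$, or some unmatched $a'$ has $p(a')<p(a)$ --- swap the less competitive shout for the more competitive one (validity is preserved since $p(a)\leq p(b)<p(b')$, and symmetrically), repeating until fair. You instead build $\mset{M'}$ in one shot from the $k$ highest bids and $k$ lowest asks, pair them in sorted order, and prove feasibility via a Hall-type threshold inequality, chaining two greedy-dominance comparisons through inequality~(\ref{equ:demand-higher-above}) of Lemma~\ref{lemma:demand-supply}. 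Each approach buys something: the paper's swap argument is shorter and does not need Lemma~\ref{lemma:demand-supply} at all, though it is slightly informal about termination (each swap strictly improves the matched shouts' total competitiveness, so the process terminates, but the paper leaves this implicit); your construction is termination-free, establishes the sharper conclusion of Corollary~\ref{corollary:fair-orderly} directly (your sorted pairing is also orderly), and your feasibility argument is essentially the combinatorial core of Theorem~\ref{theorem:mv}, since the matching you describe is exactly the one Algorithm~\ref{alg:mv} outputs. Your counting steps all check out, including the tie-breaking subtlety (ties only arise among equal-priced shouts, so the top-$k$ count at or above any threshold $p$ is exactly $\min(N(p),k)$), so the argument is sound as written.
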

\begin{proof}
We prove this lemma by directly constructing such a fair matching set, $\mset{M'}$, from $\msetm$. We start with a $\mset{M'}$ that is identical to $\msetm$.
If $\mset{M'}$ is not fair, there exists a bid-ask pair, $\pair{b}{a}$ in $\msetb_\mset{M'}$, such that there exists a $b'$ in $\msetb-\msetb_\mset{M'}$ and $p(b)<p(b')$, or a $a'$ in $\mseta-\mseta_\mset{M'}$ and $p(a)>p(a')$. In the case with $b'$, we can replace $b$ in $\pair{b}{a}$ and get $\pair{b'}{a}$. As $p(a)\leq p(b)<p(b')$, $\pair{b'}{a}$ is a valid match. In the symmetric case with $a'$, we replace $\pair{b}{a}$ in $\mset{M'}$ with $\pair{b}{a'}$. We repeatedly do this until $\mset{M'}$ is fair. As we do not change the quantity of goods to be matched through the process, the resulted $\mset{M'}$ achieves the same trading volume as $\msetm$.
\end{proof}
%

\begin{definition}
\label{def:reported-profit}
The \emph{reported overall profit} that a matching set, $\msetm$, achieves is the total trading profit, or the sum of bid-ask price difference for each unit of goods traded according to the matching set. It is denoted as
\begin{equation}
P_{\msetm} = \sum_{\mathclap{\pair{b}{a}\in\msetm}}{(p(b)-p(a))\cdot q(b)} \label{equ:report-profit}
\end{equation}
\end{definition}
Notice that the actual overall profit defined in (\ref{equ:actual-profit}) and the reported overall profit of a matching set defined here are different. The former is based on the private values of traders who trade according to the matching set while the latter is based on the prices that these traders offer. Both concern the same set of matches, the same set of traders, and the same quantity of goods to be reallocated, but only the reported overall profit is a metrics that can be actually calculated in real-world markets, where traders' private values are not available. Given a fixed trading volume, different matching sets may produce different reported overall profits, and generally speaking, the one producing higher reported overall profit is more likely to produce higher actual overall profit. This is because traders associated with more competitive private values are more likely to offer more competitive prices. This suggests that maximizing reported overall profit may potentially increase the allocative efficiency of the auction.

\begin{corollary}
\label{corollary:fair-maximizing-reported-profit}
Given a fixed trading volume, a fair matching set maximizes the reported overall profit.
\end{corollary}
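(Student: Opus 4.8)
The plan is to exploit the single-unit assumption to split the reported profit into independent bid and ask contributions, and then to recognize fairness as exactly the condition that optimizes each contribution at a given volume.

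First I would use $q(b)=q(a)=1$ to rewrite (\ref{equ:report-profit}) as
\[
P_\msetm = \sum_{b\in\msetb_\msetm}p(b) \;-\; \sum_{a\in\mseta_\msetm}p(a).
\]
The point of this form is that $P_\msetm$ depends only on \emph{which} bids and asks are matched, namely on the sets $\msetb_\msetm$ and $\mseta_\msetm$, and not at all on how they are paired. Consequently, maximizing the reported profit over matching sets of a fixed volume $k=\|\msetm\|$ splits into two independent subproblems: choose $k$ bids so as to maximize $\sum p(b)$, and choose $k$ asks so as to minimize $\sum p(a)$. The former is achieved by the $k$ highest-priced bids, the latter by the $k$ lowest-priced asks.

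Next I would show that a fair matching set makes precisely these choices. By condition~(1) of Definition~\ref{def:fair}, no unmatched bid may be priced above a matched bid, so every matched bid price is at least every unmatched bid price; hence $\sum_{b\in\msetb_\msetm}p(b)$ equals the sum of the $k$ largest bid prices. Symmetrically, condition~(2) forces $\sum_{a\in\mseta_\msetm}p(a)$ to equal the sum of the $k$ smallest ask prices. Since no matching set of volume $k$ can exceed the first sum or undercut the second, every such matching set has reported profit no larger than that of a fair one, which is the claim.

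The step I expect to require the most care is the treatment of ties in prices. When several shouts share the boundary price, distinct subsets can each be fair, so I would argue at the level of the \emph{sum} of prices rather than the identity of the chosen set, observing that any top-$k$ selection of bid prices (counted with multiplicity) yields the same total, which is all that $P_\msetm$ sees. An alternative route avoids the explicit top-$k$ characterization: starting from any matching set of volume $k$ that is not fair, one locates a matched bid $b$ and an unmatched $b'$ with $p(b)<p(b')$, or the symmetric ask case, and performs the swap used in the proof of Lemma~\ref{lemma:fair}; each such swap strictly increases $P_\msetm$, so no non-fair set can be optimal. This second argument would still need the observation that all fair matching sets of a common volume share the same profit.
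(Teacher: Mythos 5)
Your proof is correct and takes essentially the same approach as the paper, whose justification for Corollary~\ref{corollary:fair-maximizing-reported-profit} is precisely that fairness forces the matched shouts to be the most competitive ones at the given volume (via the swap construction from the proof of Lemma~\ref{lemma:fair}), and that the most competitive shouts yield the maximal reported profit --- the content of both your top-$k$ argument and your alternative swap argument. The paper gives only an informal sketch, so your explicit single-unit decomposition $P_{\msetm}=\sum_{b\in\msetb_\msetm}p(b)-\sum_{a\in\mseta_\msetm}p(a)$, the observation that the pairing is irrelevant, and your treatment of price ties at the boundary supply rigor that the paper leaves implicit.
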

The construction of a fair matching set in the proof of Lemma~\ref{lemma:fair} indicates that a fair matching set always includes the most competitive shouts from both sides and there is no way to further improve. Given a fixed trading volume, the most competitive shouts produce the maximal reported overall profit. So fairness guarantees that a more competitive shout has a higher priority to be included in the matching set --- a desirable property for individual traders --- and may potentially maximize the allocative efficiency given a fixed trading volume --- a desirable property for the market as a whole.

If fairness addresses the competitiveness of matched shouts over unmatched ones, orderliness addresses the relative competitiveness among matched shouts.
%
%
\begin{definition}
\label{def:orderly}
A matching set between $\msetb$ and $\mseta$, $\msetm$, is \emph{orderly} if, for any $\pair{b}{a}$ and $\pair{b'}{a'}$ in $\msetm$, $p(b)\leq p(b')$ when $p(a)\leq p(a')$, and vice versa.
\end{definition}
\begin{lemma}
\label{lemma:orderly}
Given a matching set between $\msetb$ and $\mseta$, $\msetm$, there exists an orderly matching set between $\msetb$ and $\mseta$, $\mset{M'}$, such that $\msetb_\msetm=\msetb_\mset{M'}$ and $\mseta_\msetm=\mseta_\mset{M'}$.
\end{lemma}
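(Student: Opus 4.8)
The plan is to prove the lemma constructively, in the iterative-replacement style used for Lemma~\ref{lemma:fair}, but now exchanging \emph{partners} rather than replacing shouts. I would start with $\mset{M'}$ equal to $\msetm$. Because every step only swaps the bids (or asks) between two pairs that are already present, the set of matched bids and the set of matched asks never change; hence $\msetb_\msetm = \msetb_{\mset{M'}}$ and $\mseta_\msetm = \mseta_{\mset{M'}}$ hold automatically throughout, and the only thing left to achieve is orderliness.

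Suppose $\mset{M'}$ is not orderly. Then by Definition~\ref{def:orderly} there are two pairs $\pair{b}{a}$ and $\pair{b'}{a'}$ whose bid order and ask order disagree; after relabelling I may assume $p(b) < p(b')$ while $p(a) > p(a')$. I would repair this \emph{inversion} by swapping the asks, replacing those two pairs with $\pair{b}{a'}$ and $\pair{b'}{a}$. The one thing that genuinely needs checking is that the swap keeps $\mset{M'}$ a legal matching set in the sense of Definition~\ref{def:matching-set}. The quantity condition is free: every shout is single-unit, so $q(\cdot)=1$ on both sides of each new pair. Validity of the price condition follows from a short chain of inequalities: from $p(b) \ge p(a) > p(a')$ I get $p(b) \ge p(a')$, and from $p(b') > p(b) \ge p(a)$ I get $p(b') \ge p(a)$, so both new pairs satisfy $p(b) \ge p(a)$.

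It remains to argue that the repairing halts. Listing disordered pairs and hoping a naive inversion count drops is the step I expect to be the main obstacle, since one swap can interact with many other pairs at once. I would instead track the monotone potential $\Phi(\mset{M'}) = \sum_{\pair{b}{a} \in \mset{M'}} p(b)\,p(a)$. A direct computation shows that the swap changes $\Phi$ by $(p(a')-p(a))(p(b)-p(b'))$, a product of two strictly negative factors, so $\Phi$ strictly increases at every repair. Since $\mset{M'}$ ranges over the finitely many re-pairings of the fixed sets $\msetb_\msetm$ and $\mseta_\msetm$, the potential $\Phi$ is bounded above and can increase only finitely often; therefore the process terminates, and it can only terminate when no inversion remains, i.e.\ at an orderly matching set, which is the desired $\mset{M'}$.

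As an alternative that sidesteps the termination bookkeeping altogether, I could define $\mset{M'}$ in one shot: sort the matched bids and the matched asks each into nondecreasing price order and pair them positionally. Orderliness is then immediate from the construction, the same inequality chain as above certifies validity, and the rearrangement inequality identifies this positional pairing as the maximizer of $\Phi$ — which is exactly the fixed point the iterative argument converges to.
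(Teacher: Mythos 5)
Your proof is correct, and its core is the same as the paper's: start from $\msetm$ and repeatedly repair an inversion by exchanging partners between two pairs, verifying that the new pairs are valid via the chain $p(a') < p(a) \leq p(b) < p(b')$ --- exactly the inequality the paper uses. Where you go beyond the paper is in the two places its proof is thinnest. First, termination: the paper simply asserts ``we repeatedly do this until $\mset{M'}$ is orderly'' with no justification, whereas your potential $\Phi(\mset{M'})=\sum_{\pair{b}{a}\in\mset{M'}}p(b)\,p(a)$, which each swap strictly increases by $(p(a')-p(a))(p(b)-p(b'))>0$ while ranging over finitely many re-pairings of the fixed sets $\msetb_\msetm$ and $\mseta_\msetm$, genuinely closes that gap. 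Second, ties: your relabelling to a \emph{strict} inversion ($p(b)<p(b')$ and $p(a)>p(a')$) is not mere convenience --- the paper's case split includes tied configurations such as $p(a)=p(a')$ with $p(b)>p(b')$, where the swap leaves the multiset of price pairs unchanged, $\Phi$ does not move, and the literal violation of Definition~\ref{def:orderly} persists (with two equal-priced bids matched to unequal-priced asks, no re-pairing can remove it), so the paper's process read literally can cycle forever; the lemma holds, and your repair terminates, precisely under the strict-inversion reading of orderliness that you adopt. Your one-shot alternative --- sort the matched bids and matched asks by price and pair them positionally --- is a genuinely different and cleaner route: orderliness holds by construction, termination is a non-issue, and the rearrangement-inequality remark correctly identifies it as the fixed point the iterative version converges to. Either of your versions is a complete proof; the paper's, strictly speaking, is not.
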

\begin{proof}
We prove this lemma by directly constructing such an orderly matching set, $\mset{M'}$, from $\msetm$, similar to the proof to Lemma~\ref{lemma:fair}. We start with a $\mset{M'}$ that is identical to $\msetm$.
If $\mset{M'}$ is not orderly, there exist two bid-ask pairs, $\pair{b}{a}$ and $\pair{b'}{a'}$ in $\mset{M'}$, such that $p(b)>p(b')$ and $p(a)\leq p(a')$, or $p(b)\leq p(b')$ and $p(a)>p(a')$. In the first case, as $p(a')\leq p(b')$, we know $p(a)\leq p(a')\leq p(b')<p(b)$. Thus we may switch $b$ and $b'$ in the two pairs, and replace the two pairs in $\mset{M'}$ with $\pair{b}{a'}$ and $\pair{b'}{a}$. The resulted $\mset{M'}$ is still a valid matching set. In the second case, a new matching set can be constructed similarly. We repeatedly do this until $\mset{M'}$ is orderly.
\end{proof}
Orderliness has no effect on changing the reported overall profit as the set of matched bids and asks remain the same. However it will affect how the actual overall profit is dispersed among traders who trade. In fact, since in an orderly matching set higher bids are paired up with higher asks, traders who placed these bids will trade at a higher transaction prices than if the common mid-point pricing rule\fn{That is that the transaction price for a match is set at the mid-point of the price range between the bid price and the ask price.} is used. An alternative way to pair up bids and asks into matches is to pair the highest bid with the lowest ask, and so on, leading to higher profit to traders who placed more competitive shouts --- the opposite to what is observed with an orderly matching set. Both the orderly matching set and the un-orderly alternative are possible when \me is used as any bid in the matching set can make a trade with any ask in the matching set. This is however not true when we try to increase the trading volume beyond that of \me. By doing so, one effect is that the profits of traders that placed the most competitive shouts is lowered. This may not be desirable for a marketplace that wants to keep those traders who are likely to be associated with the most competitive private values.
%
%

From Lemma~\ref{lemma:fair} and Lemma~\ref{lemma:orderly}, we may easily obtain the following corollaries:
\begin{corollary}
\label{corollary:fair-orderly}
Given a matching set between $\msetb$ and $\mseta$, $\msetm$, there exists a fair, orderly matching set between $\msetb$ and $\mseta$, $\mset{M'}$, such that $\|\msetm\|=\|\mset{M'}\|$.
\end{corollary}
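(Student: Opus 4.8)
The plan is to chain the two preceding lemmas in the correct order, exploiting an asymmetry in what each lemma preserves. First I would apply Lemma~\ref{lemma:fair} to the given $\msetm$ to obtain a fair matching set $\mset{M''}$ between $\msetb$ and $\mseta$ with $\|\msetm\|=\|\mset{M''}\|$. Then I would apply Lemma~\ref{lemma:orderly} to $\mset{M''}$ to obtain an orderly matching set $\mset{M'}$ satisfying $\msetb_{\mset{M''}}=\msetb_{\mset{M'}}$ and $\mseta_{\mset{M''}}=\mseta_{\mset{M'}}$. The claim is that $\mset{M'}$ is the desired fair, orderly matching set of the same volume.

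The crucial observation that makes the chaining work is that fairness, as stated in Definition~\ref{def:fair}, is a property of the matched sets $\msetb_\msetm$ and $\mseta_\msetm$ alone; it places no constraint on how individual bids are paired with individual asks. Since Lemma~\ref{lemma:orderly} leaves these matched sets unchanged, the orderly set $\mset{M'}$ inherits fairness directly from $\mset{M''}$, so $\mset{M'}$ is both fair and orderly. The same observation settles the volume: because $\|\cdot\|=\sum_{b\in\msetb_\msetm}q(b)$ depends only on the matched set of bids, and the orderly step preserves $\msetb_{\mset{M''}}=\msetb_{\mset{M'}}$, we get $\|\mset{M'}\|=\|\mset{M''}\|=\|\msetm\|$, which is exactly the equality required.

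The one point where the argument could go wrong — and hence the part that deserves care rather than genuine difficulty — is the order in which the two lemmas are applied. Applying the orderly transformation first and the fair transformation second would fail, because the fair transformation (the bid/ask replacements used in the proof of Lemma~\ref{lemma:fair}) alters the matched sets and can thereby destroy orderliness, whereas fairness, depending only on those matched sets, survives the orderly transformation intact. Recognizing that fairness is a ``set-level'' property while orderliness is a ``pairing-level'' property is what dictates the correct composition; once that is in place, the corollary follows immediately from Lemmas~\ref{lemma:fair} and~\ref{lemma:orderly} with no further computation.
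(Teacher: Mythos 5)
Your proof is correct and follows essentially the same route as the paper, which derives this corollary directly from Lemma~\ref{lemma:fair} followed by Lemma~\ref{lemma:orderly}. Your additional observation---that fairness depends only on the matched sets $\msetb_\msetm$ and $\mseta_\msetm$, which the orderly transformation of Lemma~\ref{lemma:orderly} preserves, so that the composition must be applied in this order---is exactly the detail the paper leaves implicit when it says the corollary is ``easily obtained.''
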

\begin{corollary}
\label{corollary:fair-orderly-maximal}
Given a set of bids, $\msetb$, and a set of asks, $\mseta$, there exists a fair, orderly matching set between $\msetb$ and $\mseta$, $\-\msetm$, that achieves maximal trading volume.
\end{corollary}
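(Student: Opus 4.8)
The plan is to reduce the claim to two ingredients: the finiteness of the space of matching sets, which guarantees that a maximal-volume matching set exists, and Corollary~\ref{corollary:fair-orderly}, which upgrades any matching set to a fair, orderly one of the same volume. First I would establish that a matching set of maximal trading volume exists at all. Since $\msetb$ and $\mseta$ are finite, every matching set is a finite collection of bid-ask pairs drawn from the finite set of candidate pairs, so there are only finitely many matching sets between $\msetb$ and $\mseta$; moreover at least one exists, namely the empty matching set, which satisfies Definition~\ref{def:matching-set} vacuously. Hence, as $\msetm$ ranges over this finite, nonempty collection, the trading volume $\|\msetm\|$ takes finitely many nonnegative values and therefore attains a maximum. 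Fix a matching set $\msetm^*$ achieving this maximal trading volume.

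Next I would apply Corollary~\ref{corollary:fair-orderly} to $\msetm^*$. That corollary supplies a fair, orderly matching set $\mset{M'}$ between $\msetb$ and $\mseta$ with $\|\mset{M'}\|=\|\msetm^*\|$. Because $\msetm^*$ was chosen to maximize trading volume over all matching sets between $\msetb$ and $\mseta$, and $\mset{M'}$ is itself such a matching set, the equality $\|\mset{M'}\|=\|\msetm^*\|$ forces $\mset{M'}$ to also attain the maximal trading volume. Thus $\mset{M'}$ is simultaneously fair, orderly, and of maximal trading volume, which is precisely the desired conclusion.

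I do not anticipate a genuine obstacle, since Corollary~\ref{corollary:fair-orderly} already performs the structural work of enforcing both properties without altering the volume. The one point that warrants care is the order of quantifiers in the existence step: the maximum must be taken over \emph{all} valid matching sets, not merely over the fair or orderly ones, so that the subsequent appeal to Corollary~\ref{corollary:fair-orderly} is not circular. With the existence of $\msetm^*$ secured independently of fairness and orderliness, the volume-preserving transformation of Corollary~\ref{corollary:fair-orderly} delivers a matching set enjoying all three properties at once, completing the argument.
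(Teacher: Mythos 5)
Your proof is correct and matches the paper's intended argument: the paper states this corollary as following immediately from Corollary~\ref{corollary:fair-orderly} (itself derived from Lemmas~\ref{lemma:fair} and \ref{lemma:orderly}), exactly the reduction you perform. Your explicit justification of the existence of a maximal-volume matching set via finiteness of the collection of matching sets (and nonemptiness via the empty set) fills in a step the paper leaves tacit, and your remark about taking the maximum over \emph{all} matching sets to avoid circularity is sound.
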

Given a fixed trading volume, there may be multiple matching sets that are fair and orderly. This is because there may be two or more shouts from the same side having the same offered price. They may be matched in a matching set in different orders, or even some are the least competitive ones included in the matching set and the others are the most competitive ones excluded by the matching set, leading to different matching sets. Similarly there may be multiple fair, orderly matching sets that achieve the same maximal trading volume.

\section{Maximal-Volume Matching}
\label{sec:mv}

In this section, we present \mv, an algorithm that maximizes trading volume.
%
%
%
%
The \mv algorithm consists of two steps.
The first step calculates the maximal trading volume between the demand and supply schedules, and the second step determines the bid-ask pairs to form the matching set. The idea of the first step can be intuitively illustrated. Suppose the demand and supply schedules are as shown in Figure~\ref{fig:ds-apparent}. Then the supply schedule is flipped horizontally as depicted in Figure~\ref{fig:ds-flipped-shifted} before being shifted to the right towards the demand schedule until the two `touch'. The distance that the supply schedule moves is the minimal horizontal distance between the flipped supply curve and the demand curve, the very trading volume that the \mv algorithm computes as the maximal volume. As the distance between the demand curve and the supply curve at price $p$, shown along the horizontal dashed line in Figure~\ref{fig:ds-flipped-shifted}, is exactly the sum of the demand and the supply at $p$, the minimal horizontal distance or the maximal trading volume can be presented as
\begin{equation}
Q_\ars{mv} = \min_{p}\bigl(S(p) + D(p)\bigr)
\label{equ:matching-quantity}
\end{equation}
where $S(p)$ and $D(p)$ are respectively the supply and demand at price $p$.
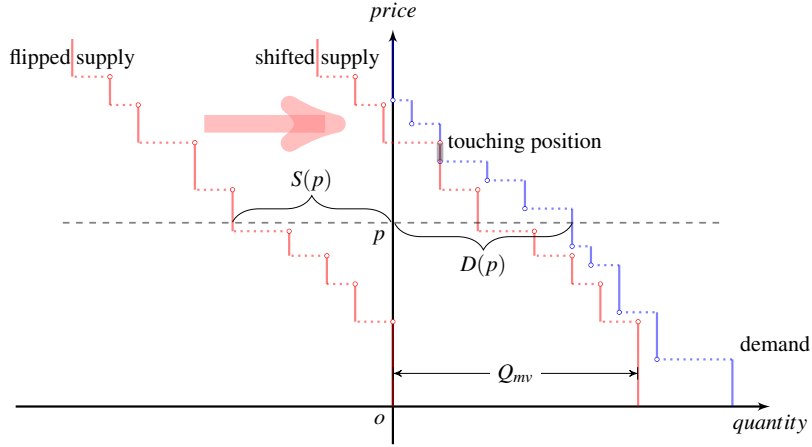
\begin{figure*}[tb]
  \begin{center}
  \resizebox{.75\linewidth}{!}{
  \mbox{%
    \colorlet{dc}{blue}
\colorlet{sc}{red}
\colorlet{fc}{white}
\colorlet{bc}{black}

\begin{tikzpicture}[scale=.15]


\draw[line width=1pt, -latex'] (-40,0) -- (40,0) node[below] {$quantity$};
\draw[line width=1pt, -latex'] (0,-4) -- (0,40) node[above] {$price$};
\node[below left] (zero) at (0,0) {$o$};


\newcount\demand;
\demand=0;
\newcount\bidcount;
\bidcount=0;

\node (bid\the\bidcount) at (\demand,39) {};
\foreach \p/\q in {32.5/2, 30/3, 26/5, 24/4, 21/5, 17/2, 15/3, 10/4, 5/8} {
    \node (xbid\the\bidcount) at (\demand,\p) {};
    \global\advance\bidcount by 1;
    \node (ybid\the\bidcount) at (\demand,\p) {};
    \global\advance\demand by \q;
    \node (bid\the\bidcount) at (\demand,\p) {};
}
\node (xbid\the\bidcount) at (\demand,0) {};

\foreach \bidi in {0, ..., \bidcount} {
    \draw[semitransparent,line width=1pt,color=dc] (bid\bidi.center) -- (xbid\bidi.center);
}
\foreach \bidi in {1, ..., \bidcount} {
    \draw[dotted,semitransparent,line width=1pt,color=dc] (ybid\bidi.center) -- (bid\bidi.center);
    \draw plot[mark=o, mark options={color=dc}, scale=3] coordinates {(ybid\bidi)};
    \draw plot[mark=*, mark options={color=fc}, scale=2] coordinates {(ybid\bidi)};
}

\node[above right] at (bid\the\bidcount) {demand};


\newcount\supply;
\supply=0;
\newcount\askcount;
\askcount=0;

\node (ask\the\askcount) at (\supply,0) {};
\foreach \p/\q in {9/4, 13/3, 16/4, 18.6/6, 23/4, 28/6, 32/3, 35/4} {
    \node (xask\the\askcount) at (\supply,\p) {};
    \global\advance\askcount by 1;
    \node (yask\the\askcount) at (\supply,\p) {};
    \global\advance\supply by -\q;
    \node (ask\the\askcount) at (\supply,\p) {};
}
\node (xask\the\askcount) at (\supply,39) {};

\foreach \aski in {0, ..., \askcount} {
    \draw[semitransparent,line width=1pt,color=sc] (ask\aski.center) -- (xask\aski.center);
}

\foreach \aski in {1, ..., \askcount} {
    \draw[dotted,semitransparent,line width=1pt,color=sc] (yask\aski.center) -- (ask\aski.center);
    \draw plot[mark=o, mark options={color=sc}, scale=3] coordinates {(yask\aski)};
    \draw plot[mark=*, mark options={color=fc}, scale=2] coordinates {(yask\aski)};
}

\node[above] at (ask\the\askcount) {flipped supply};


\newcount\supply;
\supply=26;
\newcount\askcount;
\askcount=0;

\node (ask\the\askcount) at (\supply,0) {};
\foreach \p/\q in {9/4, 13/3, 16/4, 18.6/6, 23/4, 28/6, 32/3, 35/4} {
    \node (xask\the\askcount) at (\supply,\p) {};
    \global\advance\askcount by 1;
    \node (yask\the\askcount) at (\supply,\p) {};
    \global\advance\supply by -\q;
    \node (ask\the\askcount) at (\supply,\p) {};
}
\node (xask\the\askcount) at (\supply,39) {};

\foreach \aski in {0, ..., \askcount} {
    \draw[semitransparent,line width=1pt,color=sc] (ask\aski.center) -- (xask\aski.center);
}

\foreach \aski in {1, ..., \askcount} {
    \draw[dotted,semitransparent,line width=1pt,color=sc] (yask\aski.center) -- (ask\aski.center);
    \draw plot[mark=o, mark options={color=sc}, scale=3] coordinates {(yask\aski)};
    \draw plot[mark=*, mark options={color=fc}, scale=2] coordinates {(yask\aski)};
}

\node[above] at (ask\the\askcount) {shifted supply};



\begin{scope}[>=stealth']
\draw[yshift=100pt,|<->|] (0,0) -- (26,0) node[fill=fc,pos=0.5]{$Q_{mv}$};
\end{scope}

\draw[semitransparent,line width=3pt, color=bc!60] (yask6.center) -- (xbid2.center);
\node[right,sloped] at (yask6) {touching position};

\draw[semitransparent,line width=8pt, color=sc!50, -stealth'] (-20,30) -- (-5,30);

\newcommand{\anyprice}{19.5}
\draw[semitransparent,thick, dashed,color=black] (-35,\anyprice) -- (35,\anyprice);
\node[below left] at (0,\anyprice) {$p$};

\draw [decorate,decoration={brace,amplitude=12pt},yshift=2pt]
(-17,\anyprice) -- (-0.2,\anyprice) node [bc,midway,yshift=19pt]
{$S(p)$};
\draw [decorate,decoration={brace,amplitude=12pt,mirror},yshift=-2pt]
(0.2,\anyprice) -- (19,\anyprice) node [bc,midway,yshift=-19pt]
{$D(p)$};
\end{tikzpicture}
  }
  }
  \caption{Supply schedule flipped horizontally and then shifted right towards demand schedule. The distance that the supply curve can move before touching the demand curve is exactly the maximal trading volume, $Q_\ars{mv}$.}
  \label{fig:ds-flipped-shifted}
  \end{center}
\end{figure*}

Algorithms~\ref{alg:mv} and \ref{alg:mvq} respectively show the \mv algorithm and the \ar{mv-getq} function that calculates the minimal horizontal distance between the flipped supply curve and the demand curve as shown in Figure~\ref{fig:ds-flipped-shifted} with bids and asks sorted respectively in the order of ascending price, as input. In Algorithm~\ref{alg:mvq}, horizontal distances between the two curves, recorded in $q$, are examined bottom-up by processing bids and asks in the order of ascending price, starting from price $0$. As the demand at price $0$, or the total demand in the market, is unknown at the beginning of the process, $q$ is not exactly the distance but the value of the distance minus the demand at price 0. This demand is calculated simultaneously and recorded in $q_d$. The minimal distance found through this process is adjusted eventually at the end of the \ar{mv-getq} function when $q_d$ is ready. Notice that in the \ar{mv-getq} function, all shouts are checked at most once. Asks that are priced beyond the highest bid price are not checked at all as they can only increase $q$, thus having no effect in obtaining the minimal $q$. With the trading volume known, constructing a matching set is simple as shown in Algorithm~\ref{alg:mv}.
\begin{algorithm2e}[!tb]
\caption{The \mv algorithm.}
\label{alg:mv}
\begin{small}
\LinesNumbered
\DontPrintSemicolon
\SetKwFunction{getnth}{getNth}
\SetKwFunction{len}{len}
\SetKwFunction{mvq}{MV-getQ}

\BlankLine
\KwIn{$Bids$ --- array of bids in the order of ascending price,
\mbox{$Asks$ --- array of asks in the order of ascending price}}
\KwOut{$\msetm$}
\BlankLine

\Begin{
\tcp{calculate the maximal trading volume.}
$q_\ars{mv} \leftarrow$ \mvq{$Bids$,\ $Asks$}\;\label{alg:mv-line1}
\BlankLine
\tcp{pair up the $q_\ars{mv}$ most competitive bids and asks in the order of ascending price}
$\msetm \leftarrow \varnothing$\;
$i \leftarrow 0$\;
\While{$i<q_\ars{mv}$}{\label{alg:mv-line2}
  $b \leftarrow$ \getnth{$Bids$,\ \len{$Bids$}$-q_\ars{mv}+i$}\;\label{alg:mv-line3}
  $a \leftarrow$ \getnth{$Asks$,\ $i$}\;
  $\msetm \leftarrow \msetm \cup \{\pair{b}{a}\}$\;
  $i++$\;
}}
\tcp{\len{}: returns the length of an array.}
\end{small}
\end{algorithm2e}
\begin{algorithm2e}[!tb]
\caption{The \ar{mv-getQ} function used in Algorithm~\ref{alg:mv}.}
\label{alg:mvq}
\begin{small}
\LinesNumbered
\DontPrintSemicolon
\SetKwFunction{poll}{poll}
\SetKwFunction{duplicate}{duplicate}
\SetKwFunction{reverse}{reverse}
\SetKwFunction{asqueue}{asQueue}
\SetKwFunction{minimum}{min}
\SetKw{kwis}{\textbf{is}}
\SetKw{kwisnot}{\textbf{is not}}
\SetKw{kwand}{\textbf{and}}

\BlankLine
\KwIn{\textit{omitted and same as in Algorithm~\ref{alg:mv}.}}
\KwOut{$q_{min}$}
\BlankLine

\Begin{

$q_{min} \leftarrow 0$\;
\BlankLine
\tcp{wrap up as queues}
$Bids \leftarrow$ \asqueue{$Bids$} \quad $Asks \leftarrow$ \asqueue{$Asks$}\;
\BlankLine
$a \leftarrow$ \poll{$Asks$}\;
\If{$a$ \kwisnot NULL}{
  $b \leftarrow$ \poll{$Bids$}\;
  \While{$b$ \kwisnot NULL \kwand $p(b) < p(a)$}{
    $b \leftarrow$ \poll{$Bids$}\;
  }
  \BlankLine
  \tcp{$q_d$: will be the demand at price $0$.}
  $q_d \leftarrow 0$\;
  \BlankLine
  \tcp{$q$: current horizontal distance between the flipped supply and the demand curves (minus the final value of $q_d$, which is unknown at present).}
  $q \leftarrow 0$\;
  \BlankLine
  \While{$b$ \kwisnot NULL}{
    \eIf{$a$ \kwisnot NULL \kwand $p(a)\leq p(b)$}{
      $q \leftarrow q\ +\ q(a)$\;
      $a \leftarrow$ \poll{$Asks$}\;
    }{
      $q \leftarrow q\ -\ q(b)$\;
      $q_{min} \leftarrow$ \minimum{$q_{min}$, $q$}\;
      $q_d \leftarrow q_d\ +\ q(b)$\;
      $b \leftarrow$ \poll{$Bids$}\;
    }
  }

  $q_{min} \leftarrow q_{min}\ +\ q_d$\;
}
}
\tcp{\asqueue{}: wrap up a given array as a read-only queue in the same order as items appear in the array.}
\end{small}
\end{algorithm2e}

Before we prove the soundness of the \mv algorithm, we examine the relationship between the trading volume that is achieved by an arbitrary matching set and the sum of demand and supply at any price. We have
\begin{lemma}
\label{lemma:upper-bound}
Given a matching set, $\msetm$, the trading volume achieved by $\msetm$ is no higher than the sum of the demand and the supply of the market at any given price, i.e.,
\[
\|\msetm\| \leq S(p) + D(p)
\]
for any price $p\geq 0$.
\end{lemma}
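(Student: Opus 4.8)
The plan is to bound $\|\msetm\|$ from above by first passing to the quantities contributed by the matched shouts alone. For a fixed price $p$, let $D'(p)=\sum_{b\in\msetb_\msetm,\,p(b)\geq p}q(b)$ be the demand coming only from the matched bids and $S'(p)=\sum_{a\in\mseta_\msetm,\,p(a)\leq p}q(a)$ the supply coming only from the matched asks. Since $\msetb_\msetm\subseteq\msetb$ and $\mseta_\msetm\subseteq\mseta$, the full market quantities dominate these, so $D(p)\geq D'(p)$ and $S(p)\geq S'(p)$. It therefore suffices to prove the stronger inequality $\|\msetm\|\leq S'(p)+D'(p)$; adding the two domination inequalities then yields the claim.

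The heart of the argument is a counting observation over the pairs of $\msetm$. Take any $\pair{b}{a}\in\msetm$; by Definition~\ref{def:matching-set} we have $p(a)\leq p(b)$. I would show that each pair contributes at least one unit to the sum $S'(p)+D'(p)$: if $p(b)\geq p$ then the bid $b$ is counted in $D'(p)$; otherwise $p(b)<p$, and then $p(a)\leq p(b)<p$ forces $p(a)\leq p$, so the ask $a$ is counted in $S'(p)$. Because every shout is single-unit and each matched bid and ask belongs to exactly one pair, summing this one-unit-per-pair bound over all $\|\msetm\|$ pairs gives $S'(p)+D'(p)\geq\|\msetm\|$ directly. (A pair with $p(b)\geq p$ and $p(a)\leq p$ actually contributes two units, but one per pair is all that is required.)

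Chaining the inequalities gives $S(p)+D(p)\geq S'(p)+D'(p)\geq\|\msetm\|$ for every $p\geq 0$, which is the lemma. An equivalent accounting writes $\|\msetm\|=D'(p)+\sum_{b\in\msetb_\msetm,\,p(b)<p}q(b)$ and observes that each matched bid priced below $p$ is paired with a distinct ask also priced below $p$, so the trailing sum is at most $S'(p)$; this parallels the counting behind Lemma~\ref{lemma:demand-supply}.

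The only real obstacle is keeping the two definitions straight --- demand counts bids priced at least $p$ while supply counts asks priced at most $p$ --- and handling the boundary cases $p(b)=p$ and $p(a)=p$. The matching-set inequality $p(a)\leq p(b)$ is exactly what makes the dichotomy ``$p(b)\geq p$ or $p(a)\leq p$'' exhaustive, so every pair is accounted for and no case is left uncovered.
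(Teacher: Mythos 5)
Your proof is correct, but it takes a genuinely different route from the paper's. The paper first replaces $\msetm$ by a fair, orderly matching set of equal volume via Corollary~\ref{corollary:fair-orderly}, then chains the same subset dominations you use ($D(p)\geq D'(p)$ and $S(p)\geq S'(p)$ in your notation) with inequality~(\ref{equ:demand-higher-above}) of Lemma~\ref{lemma:demand-supply} applied to the matched shouts, followed by the complement identity $\sum_{a\in\mseta_\msetm,\,p(a)\geq p}q(a)=\|\msetm\|-\sum_{a\in\mseta_\msetm,\,p(a)<p}q(a)$, and finally adds $S(p)$ to both sides. You bypass Lemma~\ref{lemma:demand-supply} and the fair/orderly reduction entirely with a direct per-pair dichotomy: since $p(a)\leq p(b)$ within each pair $\pair{b}{a}$, either $p(b)\geq p$ (so $b$ is counted in $D'(p)$) or $p(a)\leq p(b)<p$ (so $a$ is counted in $S'(p)$), and because each bid and each ask occurs in at most one pair, the contributions are disjoint, giving $D'(p)+S'(p)\geq|\msetm|=\|\msetm\|$ at once. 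Your argument is more elementary and self-contained, and it even extends to multi-unit shouts without modification, since $q(b)=q(a)$ within a pair means each pair contributes at least its full quantity. It is also worth observing that the paper's opening reduction to a fair, orderly $\msetm$ is vestigial---Lemma~\ref{lemma:demand-supply} is stated for arbitrary matching sets and fairness and orderliness are never used in the subsequent derivation---so your streamlined version loses nothing; what the paper's route buys is reuse of the structural Lemma~\ref{lemma:demand-supply} (whose own justification in the paper is essentially the counting idea you spell out explicitly, given there only informally), whereas your proof makes this lemma stand on its own.
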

\begin{proof}
As any matching set has a fair, orderly counterpart that achieves the same trading volume according to Corollary~\ref{corollary:fair-orderly}, let us simply assume that $\msetm$ is fair and orderly.
Furthermore, because $\msetb_\msetm$ is a subset of $\msetb$, and $\mseta_\msetm$ is a subset of $\mseta$, we have
\begin{alignat*}{3}
S(p) &= \sum_{\mathclap{a\in\mseta,p(a)\leq p}}q(a) &&\geq \sum_{\mathclap{a\in\mseta_\msetm,p(a)\leq p}}q(a). \nonumber \\
D(p) &= \sum_{\mathclap{b\in\msetb,p(b)\geq p}}q(b) &&\geq \sum_{\mathclap{b\in\msetb_\msetm,p(b)\geq p}}q(b) \nonumber
\intertext{From (\ref{equ:demand-higher-above}) in Lemma~\ref{lemma:demand-supply}, we know}
D(p) &\geq \sum_{\mathclap{b\in\msetb_\msetm,p(b)\geq p}}q(b) &&\geq \sum_{\mathclap{a\in\mseta_\msetm,p(a)\geq p}}q(a) \nonumber \\
\ &\ &&= \|\msetm\| - \sum_{\mathclap{a\in\mseta_\msetm,p(a)<p}}q(a) \nonumber \\
\ &\ &&\geq \|\msetm\| - \sum_{\mathclap{a\in\mseta_\msetm,p(a)\leq p}}q(a) \nonumber \\
\ &\ &&\geq \|\msetm\| - S(p) \nonumber
\end{alignat*}
Adding $S(p)$ on both sides, we get
\[
\|\msetm\| \leq S(p) + D(p).
\]
%
\end{proof}
Now we can show that \mv meets the standards of fairness and orderliness that we defined above.
\begin{theorem}
\label{theorem:mv}
The \mv algorithm produces a fair, orderly matching set that maximizes the trading volume.
\end{theorem}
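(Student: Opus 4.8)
The plan is to check, one by one, the four things bundled into the statement: that \mv returns an object satisfying Definition~\ref{def:matching-set}, that this object is fair (Definition~\ref{def:fair}) and orderly (Definition~\ref{def:orderly}), and that its volume is the largest any matching set can have. Everything hinges on the number $q_\ars{mv}$ returned by \ar{mv-getQ}, so I would first confirm that it equals $Q_\ars{mv}=\min_{p}\bigl(S(p)+D(p)\bigr)$ from (\ref{equ:matching-quantity}). I would read the main loop of Algorithm~\ref{alg:mvq} as a price sweep that merges the two ascending lists: crossing an ask executes $q\leftarrow q+q(a)$, mirroring the unit increase of $S$, and crossing a bid executes $q\leftarrow q-q(b)$, mirroring the unit decrease of $D$. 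Hence at every stage the running $q$ equals $S(p)+D(p)$ less the eventual value of $q_d$ (the total demand priced at or above the lowest ask); recording $q_{min}$ right after each bid-decrement samples $S(p)+D(p)$ at every price where it can be locally smallest, and the closing correction $q_{min}\leftarrow q_{min}+q_d$ undoes the shift. The fiddly points—tie-breaking at equal prices ($\leq$ versus $<$) and the harmless skipping of bids below the lowest ask (which can only raise $S(p)+D(p)$ in the region containing the minimum)—I would dispatch as routine, leaving $q_\ars{mv}=\min_p\bigl(S(p)+D(p)\bigr)$.

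Writing the ascending-sorted bids and asks as $b_1,\dots,b_n$ and $a_1,\dots,a_m$, Algorithm~\ref{alg:mv} forms the pairs $\pair{b_{n-q_\ars{mv}+k}}{a_k}$ for $k=1,\dots,q_\ars{mv}$, i.e.\ it couples the $q_\ars{mv}$ highest bids with the $q_\ars{mv}$ lowest asks, matching $k$-th to $k$-th. \emph{Fairness} and \emph{orderliness} then drop out of the construction. Since the selected bids are precisely the top $q_\ars{mv}$ and the selected asks precisely the bottom $q_\ars{mv}$, no excluded bid is priced strictly above an included one and no excluded ask strictly below an included one, which is Definition~\ref{def:fair}; and since $p(b_{n-q_\ars{mv}+k})$ and $p(a_k)$ are both nondecreasing in the shared index $k$, the two coordinates of the pairs are co-monotone, so no two pairs cross, which is Definition~\ref{def:orderly}. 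Each pair moreover has $q(b)=q(a)=1$ and reuses no shout, so the sole remaining clause of Definition~\ref{def:matching-set} needing attention is the price condition $p(b)\geq p(a)$ on every pair.

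This price condition is the crux, and I expect it to be the main obstacle, because it is the only step that uses the specific value of $q_\ars{mv}$ rather than merely the shape of the construction. I would argue by contradiction. Suppose $p(b_{n-q_\ars{mv}+k})<p(a_k)$ for some $k$, and fix any price $p$ strictly between the two (the interval is nonempty by strictness). Because the bids are sorted, every bid priced at least $p$ must sit among $b_{n-q_\ars{mv}+k+1},\dots,b_n$, so $D(p)\leq q_\ars{mv}-k$; because the asks are sorted, every ask priced at most $p$ must sit among $a_1,\dots,a_{k-1}$, so $S(p)\leq k-1$. Summing gives $S(p)+D(p)\leq q_\ars{mv}-1<q_\ars{mv}=\min_{p'}\bigl(S(p')+D(p')\bigr)$, a contradiction. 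Therefore $p(b)\geq p(a)$ holds for every pair, $\msetm$ is a genuine matching set, and it plainly has $\|\msetm\|=q_\ars{mv}$.

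Maximality is then immediate: Lemma~\ref{lemma:upper-bound} caps the volume of \emph{every} matching set by $S(p)+D(p)$ at each price, hence by $\min_p\bigl(S(p)+D(p)\bigr)=q_\ars{mv}$, and $\msetm$ attains this cap. Assembling the fairness, orderliness, validity, and volume conclusions proves the theorem. As a preliminary sanity check that the construction is even well defined, I would note that evaluating $S(p)+D(p)$ at a price below the lowest ask and at a price above the highest bid yields $q_\ars{mv}\leq\min(m,n)$, so there are always enough bids and asks to select the required $q_\ars{mv}$ of each.
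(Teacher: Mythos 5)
Your proof is correct, and it follows the paper's skeleton --- Lemma~\ref{lemma:upper-bound} caps every matching set's volume by $\min_p\bigl(S(p)+D(p)\bigr)$, and the algorithm's output attains that cap --- but it handles the crux differently. The paper verifies that the pairs $\pair{b_{n-q_\ars{mv}+k}}{a_k}$ are valid matches by appealing to the geometric picture of Figure~\ref{fig:ds-flipped-shifted}: the flipped supply curve, shifted by $Q_\ars{mv}$, ``never goes beyond the demand curve vertically,'' so each matched bid sits at or above its matched ask. You replace this visual appeal with a self-contained counting contradiction: if $p(b_{n-q_\ars{mv}+k})<p(a_k)$ for some $k$, any price $p$ strictly between them has $D(p)\leq q_\ars{mv}-k$ and $S(p)\leq k-1$, so $S(p)+D(p)\leq q_\ars{mv}-1$, contradicting the minimality defining $q_\ars{mv}$. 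This buys rigor where the paper offers intuition, and it isolates exactly which step uses the specific value of $q_\ars{mv}$ rather than just the shape of the construction; you also make explicit two things the paper leaves implicit, namely the well-definedness check $q_\ars{mv}\leq\min(m,n)$ (by evaluating $S+D$ below the lowest ask and above the highest bid) and a loop-invariant reading of \ar{mv-getq} (the running $q$ tracks $S(p)+D(p)$ minus the eventual $q_d$, sampled just after each bid), which is more detail than the paper gives for the claim that the function computes (\ref{equ:matching-quantity}). The paper's geometric route, in turn, buys unified intuition --- the same picture motivates the algorithm and its correctness at once. Your fairness and orderliness arguments (top-$q_\ars{mv}$ bids with bottom-$q_\ars{mv}$ asks, paired co-monotonically by index) match the paper's in substance; note only that Definition~\ref{def:orderly} read literally is delicate under price ties (equal ask prices would force equal bid prices), so your ``co-monotone, non-crossing'' reading is the intended one --- the same reading the paper itself uses in the proof of Lemma~\ref{lemma:orderly}.
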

\begin{proof}
According to Lemma~\ref{lemma:upper-bound} and the arbitrariness of $p$, given any matching set $\msetm$, we know
\[
\|\msetm\| \leq \min_{p}\bigl(S(p) + D(p)\bigr),
\]
which gives an upper bound for the trading volume that could be achieved by a matching set. This upper bound is exactly $Q_\ars{mv}$ from (\ref{equ:matching-quantity}), the quantity calculated by the \ar{mv-getq} function in Algorithm~\ref{alg:mvq}.

This upper bound is also the part of supply in the shifted supply curve on the right-hand side of the $price$ axis in Figure~\ref{fig:ds-flipped-shifted}. Indeed, the bids and asks that are matched in the \mv algorithm form this part of supply and the part of demand that fall into the horizontal range labeled by $Q_\ars{mv}$ in Figure~\ref{fig:ds-flipped-shifted}. Because these bids and asks are processed in the order of ascending price and the shifted supply curve never goes beyond the demand curve vertically, each bid-ask pair produced by the \mv algorithm makes a valid match and all these pairs form a matching set that is fair and orderly and maximizes the trading volume.
\end{proof}

The \ar{mv-getq} function in Algorithm~\ref{alg:mvq} is a linear, one-pass scan of the shouts in the market, so the time complexity is $O(n+m)$. Thus the time complexity of the whole \mv algorithm is no more than generating sorted shouts as input, which is:
\[
O(\max(n,m)\cdot\log\max(n,m))
\]
In addition, the \mv algorithm consumes no more space than that needed for storing the sorted bids and asks, so its space complexity is
$O(n+m)$.
\mv is thus more efficient than \cite{zhao-ai2010-mm}'s approach to generating the same set of matching shouts.


So far, we have considered \mv as an alternative to the standard matching algorithm \me. However, we can also combine the two approaches. We do this in an attempt to  combine their advantages --- \me's ability to increase allocative efficiency with \mv's ability to increase trading volume. We call the combined policy \mt. It is parametric in the sense that its behavior is determined by a parameter  $\theta$ which specifies how to set the trading volume as a combination of $Q_\ars{mv}$, the trading volume computed by \ar{mv-getQ}, and $Q_\ars{me}$, the trading volume computed by the analogous process in equilibrium matching. Then, the trading volume set by this parametric policy is:
\begin{equation}
Q_\theta =
\begin{cases}
  (1+\theta)\cdot Q_\ars{me}  & \text{if $-1\leq \theta\leq 0$} \\
  \\
  (1-\theta)\cdot Q_\ars{me} + \theta\cdot Q_\ars{mv} & \text{if $0\leq \theta\leq 1$}
\end{cases}
\end{equation}
To obtain the \mt algorithm, we then substitute $Q_\theta$ for $Q_\ars{mv}$ in lines \ref{alg:mv-line1}, \ref{alg:mv-line2}, and \ref{alg:mv-line3} of
Algorithm~\ref{alg:mv}.

\mt represents a continuum of matching policies, including \me and \mv as special cases. When $\theta$ is
\begin{itemize}
\item[0:] $Q_\theta$ equals $Q_\ars{me}$, and \mt becomes \me;
\item[1:] $Q_\theta$ equals $Q_\ars{mv}$, and \mt is identical to \mv;
\item [-1:] $Q_\theta$ equals 0, and \mt does not match any offers.
\end{itemize}
In the following section we compare the performance of different values of $\theta$.

\section{Experimental Analysis}
\label{sec:experimental}

We evaluate the performance of maximal-volume matching by comparing a range of different auction mechanisms. Two of these are standard mechanisms --- the clearing house (\ch) and the continuous double auction (\cda), both of which employ the standard equilibrium matching policy \me. In the \ch, traders make offers, and at some predetermined closing time the matching algorithm is run, clearing the auction. In the \cda, the matching algorithm is run when every offer is made. The other auctions in our experiments are variations of the \ch which use various versions of \mt. Because these are essentially clearing houses, they run the matching algorithm at predetermined closing times, but the algorithm they use to do the matching varies from that in the standard \ch. We use three versions of \mt, with $\theta$ taking the values $-0.5$, $0.5$, and $1.0$ and we name these versions to reflect the value of $\theta$, calling them $\mt_{-0.5}$, $\mt_{0.5}$ and $\mv$.\fn{Of course, \ch is essentially the same as $\mt_{0}$.} When we need to distinguish this latter \mv from the \mv algorithm, we will call it ``the \mv market".

In our experiments, each market is populated by 20 traders, evenly split between sellers and buyers. Traders draw their private values from a uniform distribution between 50 and 150. In each experiment, traders all use the same strategy for deciding how to make offers. We ran experiments with five different strategies:
%
\emph{Truth Telling} (\ttt), which always bids at the private value;
\emph{Pure Simple} (\tps), which offers a price with a fixed profit level, $\delta$ \cite{zhan-jedc07-markups.in.cdas}\fn{``Pure simple'' is the name of this strategy in  \jcat \cite{niu-aamas08-jcat}, \cite{zhan-jedc07-markups.in.cdas} just calls it a ``markup strategy''.};
\emph{Zero Intelligence with Constraint} (\tzic), which randomly picks a price to offer but avoids trading at loss \cite{gode-sunder-93-jpe-zi}; and
\emph{Gjerstad and Dickhaut} (\tgd), which estimates the probability of an offer being accepted from the distribution of past offers, and chooses the offer which maximizes its expected utility \cite{gjerstad-dickhaut-98-geb-gd}.
%
Due to the parameter in \tps, we consider multiple versions of \tps, representing traders that are greedy at different levels, with $\delta$ taking the values $5$, $10$, $15$ or $20$. (That is we ran experiments for each of these different markups, and for each we ran experiments where all traders used the same markup). Note that when $\delta$ is 0, \tps is exactly \ttt.
%
%
We ran each experiment 100 times,  and experiments were run using \jcat, open-source software for the simulation of auction mechanisms \cite{niu-aamas08-jcat}. In \jcat, as is common in agent-based market simulations, simulations are broken into \emph{days} and \emph{rounds}. A day is intended to model a day's activity in the market, traders come to the market each day with new trade entitlements (money for buyers, goods for sellers) but can recall the progress of the previous day's trading, so learning is possible across days. A round is an opportunity for a trader to place a new shout or to modify their existing shout, and multiple rounds give traders a chance to watch prices evolve.

The first set of experiments we ran were intended to establish a performance baseline over a single day's trading. We ran 35 of these baseline experiments in total (each repeated 100 times), one for each combination of the five auction mechanisms (including the three \mt mechanisms) and the seven types of trading strategy (including the four \tps strategies). Each experiment ran for one trading day and each day had a single round. (Since most of our trading strategies do not learn, limiting them to a single round and a single day is not a big handicap.)

The results of the baseline experiments are given in Figure~\ref{fig:baseline}, which shows both the trading volume and the allocative efficiency of the markets.
\begin{figure*}[tb]
\begin{center}
  \mbox{
    \subfigure[Trading Volume.]{\label{fig:baseline-trading-volume}\includegraphics[width=.43\linewidth]{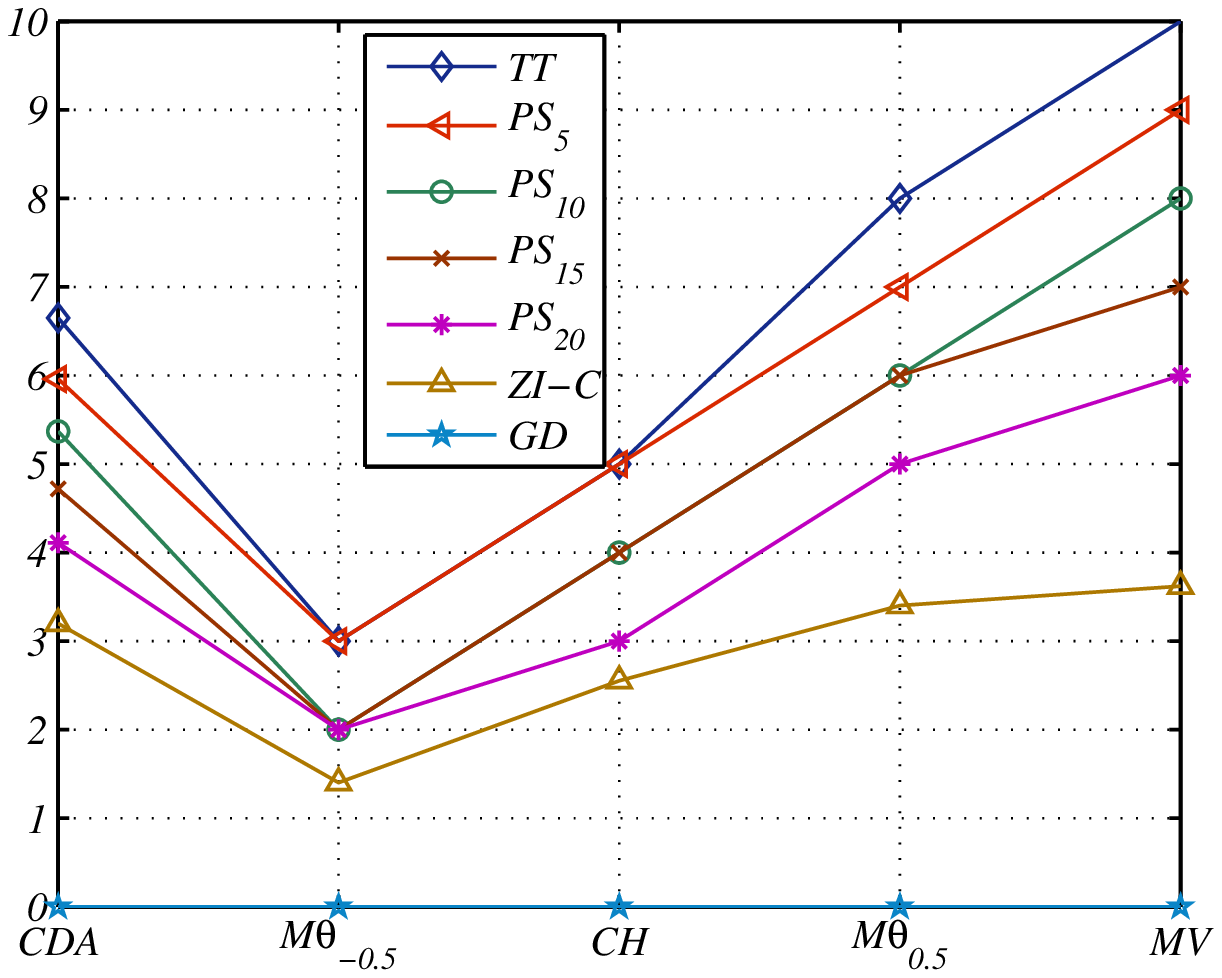}}
    \quad
    \subfigure[Allocative Efficiency.]{\label{fig:baseline-eff}\includegraphics[width=.43\linewidth]{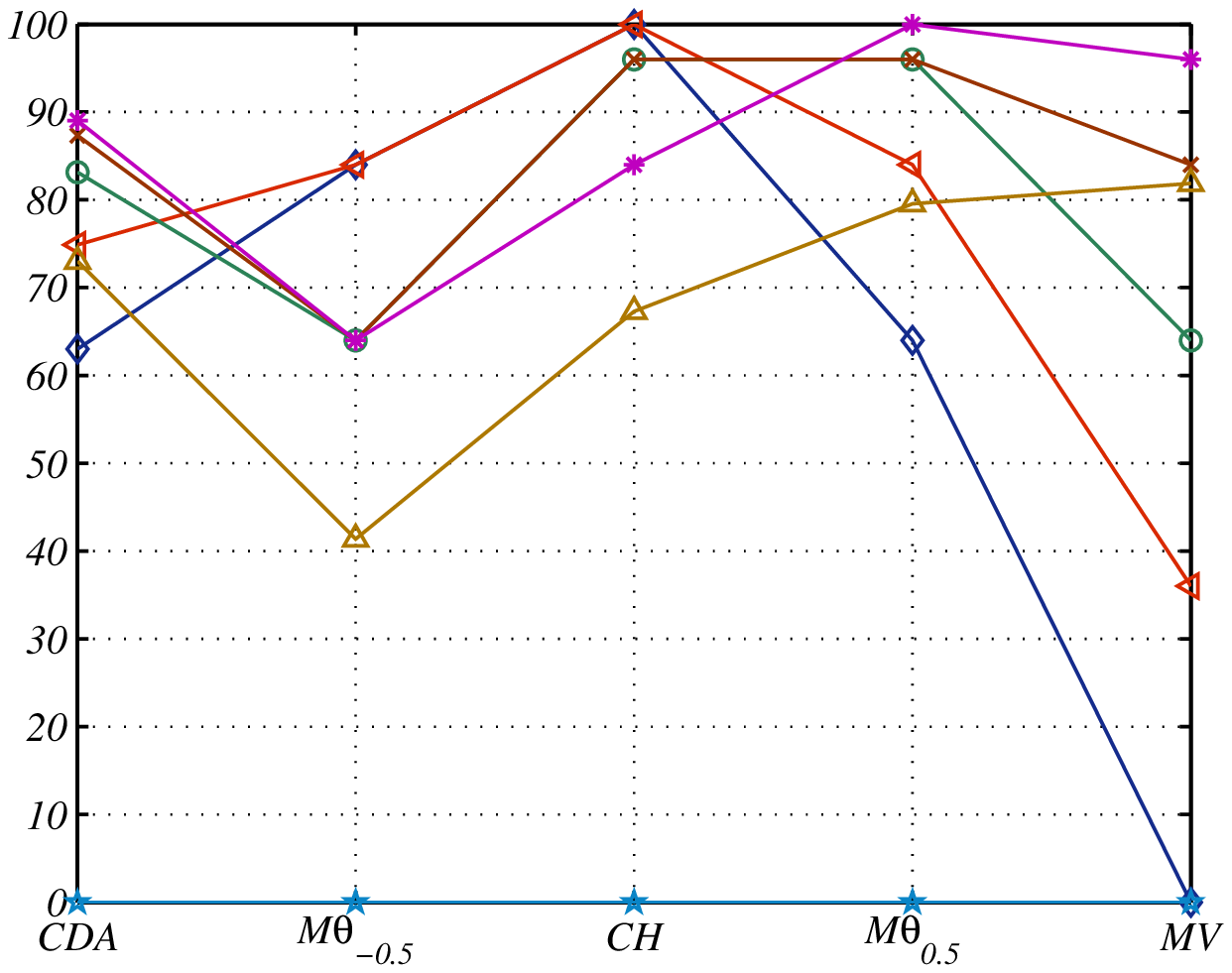}}
  }
\caption{Markets running a single day and a single round for that day.}\label{fig:baseline}
\end{center}
\end{figure*}

The main thing to note about the results in Figure~\ref{fig:baseline-trading-volume} is the steady increase in trading volume as we move from $\mt_{-0.5}$ to $\mt_{1.0}$ (\ch, as noted before, is $\mt_{0}$, and the \mv market is $\mt_{1.0}$). Thus, within the structure of the traditional clearing house, the maximal matching algorithm \mv does exactly what it is supposed to --- it increases trading volume. In terms of trading volume, the \cda performs better than the \ch for all trading strategies, despite the fact that it is using the same equilibrium matching algorithm \me. This is because the market can clear after each offer is received --- the rest is to randomize the set of possible matches, throwing up the chance of a match between an intra-marginal and an extra-marginal trader (that this is the effect of the \cda is not a new observation, but to our knowledge this is the first time that it has been related to trading volume). Also worth noting is that the trading volume decreases monotonically with the greediness of the traders --- the markup  that they apply to their private value in establishing their offers.\fn{Since \tzic picks a random markup that can be 100\% of the private value, on average \tzic is greedier than the greediest \tps strategy.} The effect of the \mv algorithm component in \mt is to increase the number of traders that can be matched, while the effect of the markups is to decrease this number. The bigger the markup, the more it offsets the gains made by the \mv component (again the effect of the margins is not a new observation, having been made in \cite{zhan-jedc07-markups.in.cdas}, but it hasn't before been related to trading volume).

In showing that the \mv algorithm does its job in increasing trading volume, Figure~\ref{fig:baseline-trading-volume} stresses the benefit of the approach. Figure~\ref{fig:baseline-eff} gives the cost. Here we see that as $\theta$ increases from $-0.5$ to $1$, allocative efficiency first increases, and then, for less greedy strategies, falls dramatically. However, for greedier traders, this drop does not offset the initial increase. If the markup is too small, then the maximal-volume matching includes lots of unprofitable trades, and there is little actual profit to generate high efficiency. Big markups, however, reduce the number of trades that are allowable (exactly the effect discussed by \cite{zhan-jedc07-markups.in.cdas} again), but the ones that are ruled out are the least profitable, so efficiency is higher. We should also point out that $\mt_{0.5}$ obtains higher allocative efficiency than the \cda for all trading strategies, indicating that some intermediate matching policy between \me and \mv allows the clearing house-style auction to beat the \cda in terms of both volume and allocative efficiency.

Now, these baseline experiments are somewhat artificial since they limit every trader to making a single offer. (The lack of learning opportunities explains why \tgd fails to make any trade in the results of Figure~\ref{fig:baseline}.) They are helpful in comparing the various matching policies based on a snapshot of supply and demand, the scenario that we considered in designing \mv. However, to understand how the \mv algorithm would help in more realistic markets we ran a second set of experiments. In these experiments we increased the number of chances each trader has to make offers. We still looked at just a single trading day, but looked at performance as the number of rounds of bidding was increased from 1 to 10. Figure~\ref{fig:multirounds} shows the results for the \cda,  \ch, and \mv markets.
\begin{figure*}[tb]
\begin{center}
    \subfigure[Trading volume of \cda.]{\label{fig:multirounds-cda-trading-volume}\includegraphics[width=.32\linewidth]{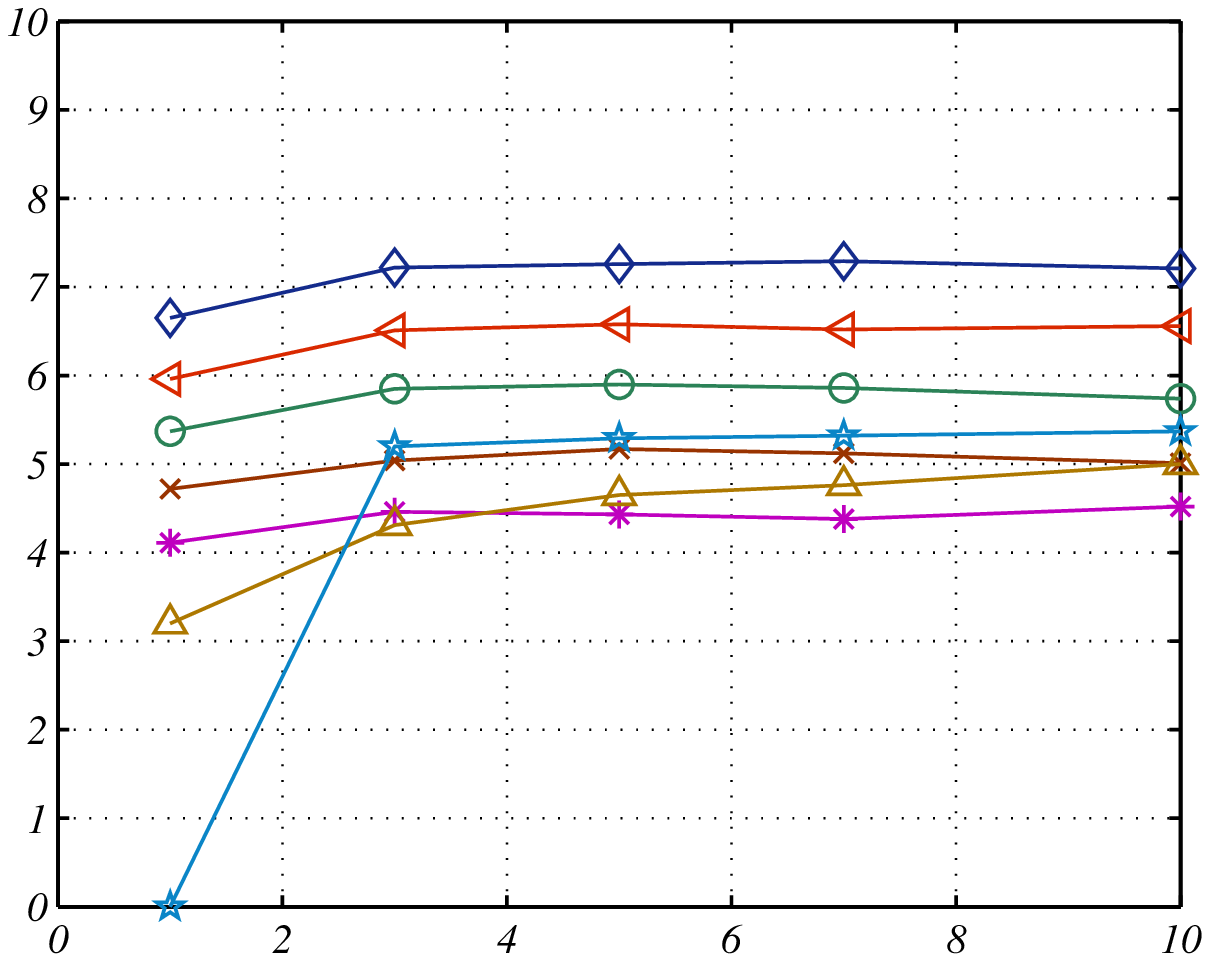}}
    \subfigure[Trading volume of \ch.]{\label{fig:multirounds-ch-trading-volume}\includegraphics[width=.32\linewidth]{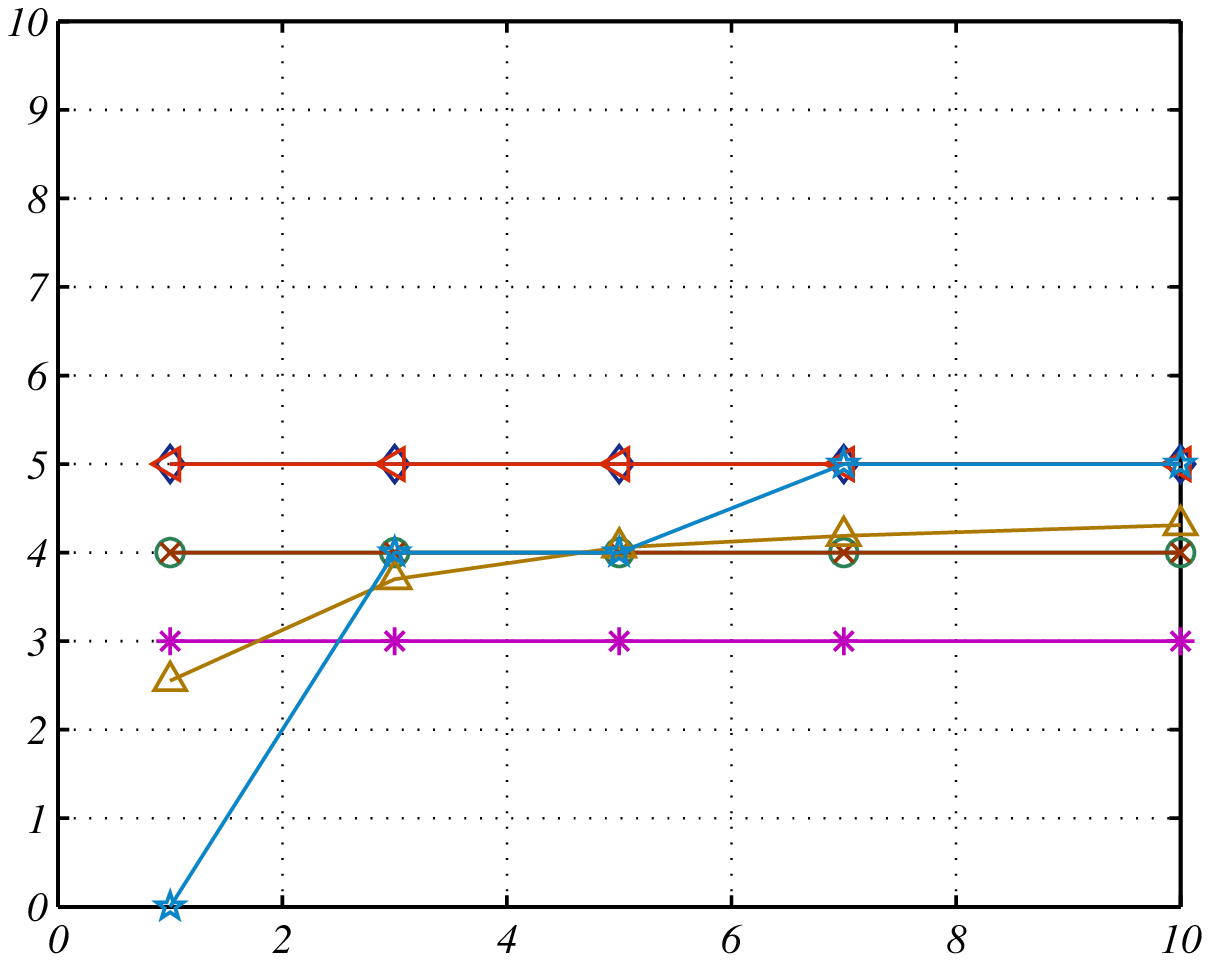}}
    \subfigure[Trading volume of \mv.]{\label{fig:multirounds-mt1.0-trading-volume}\includegraphics[width=.32\linewidth]{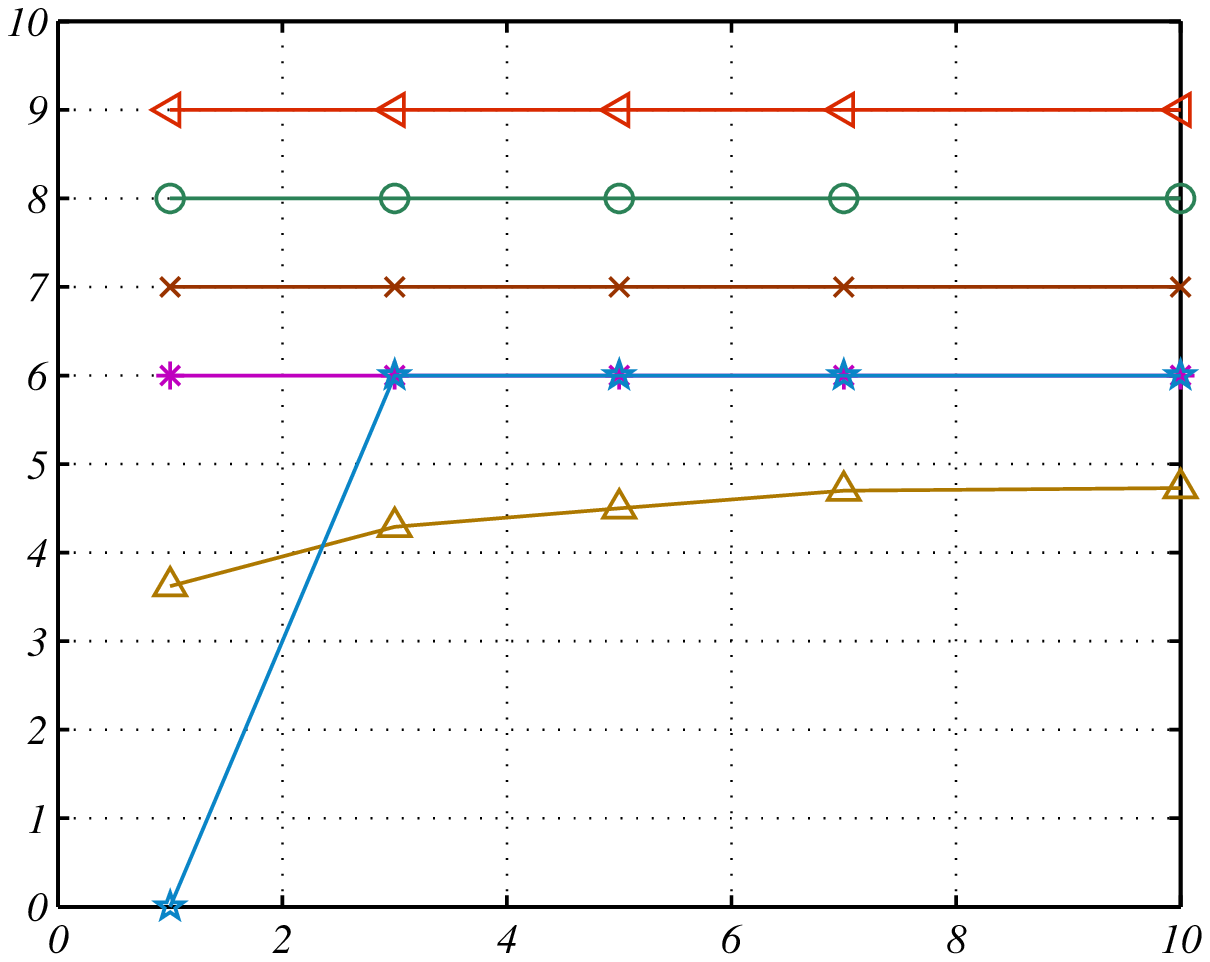}}
  \mbox{
    \subfigure[Allocative Efficiency of \cda.]{\label{fig:multirounds-cda-eff}\includegraphics[width=.32\linewidth]{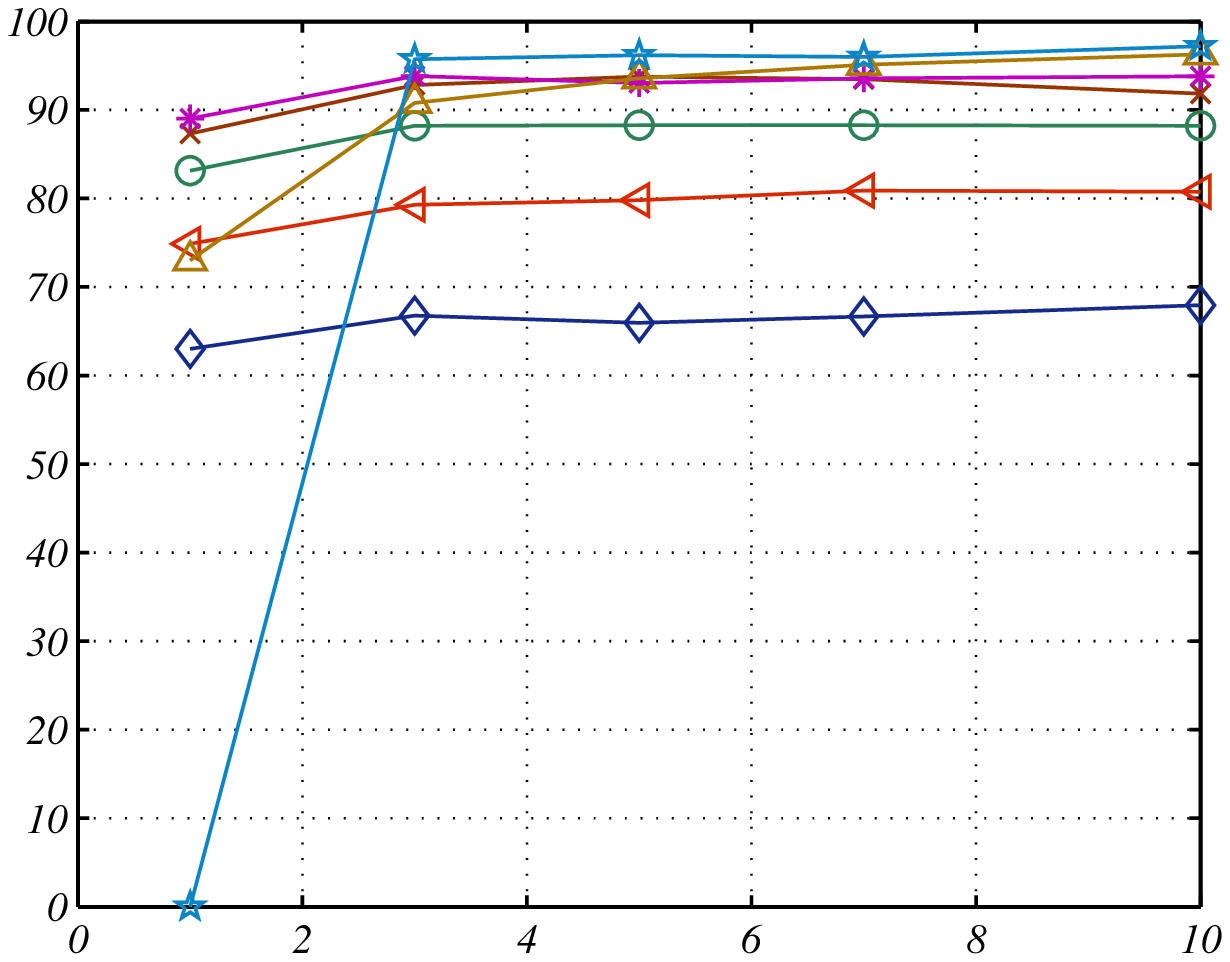}}
    \subfigure[Allocative Efficiency of \ch.]{\label{fig:multirounds-ch-eff}\includegraphics[width=.32\linewidth]{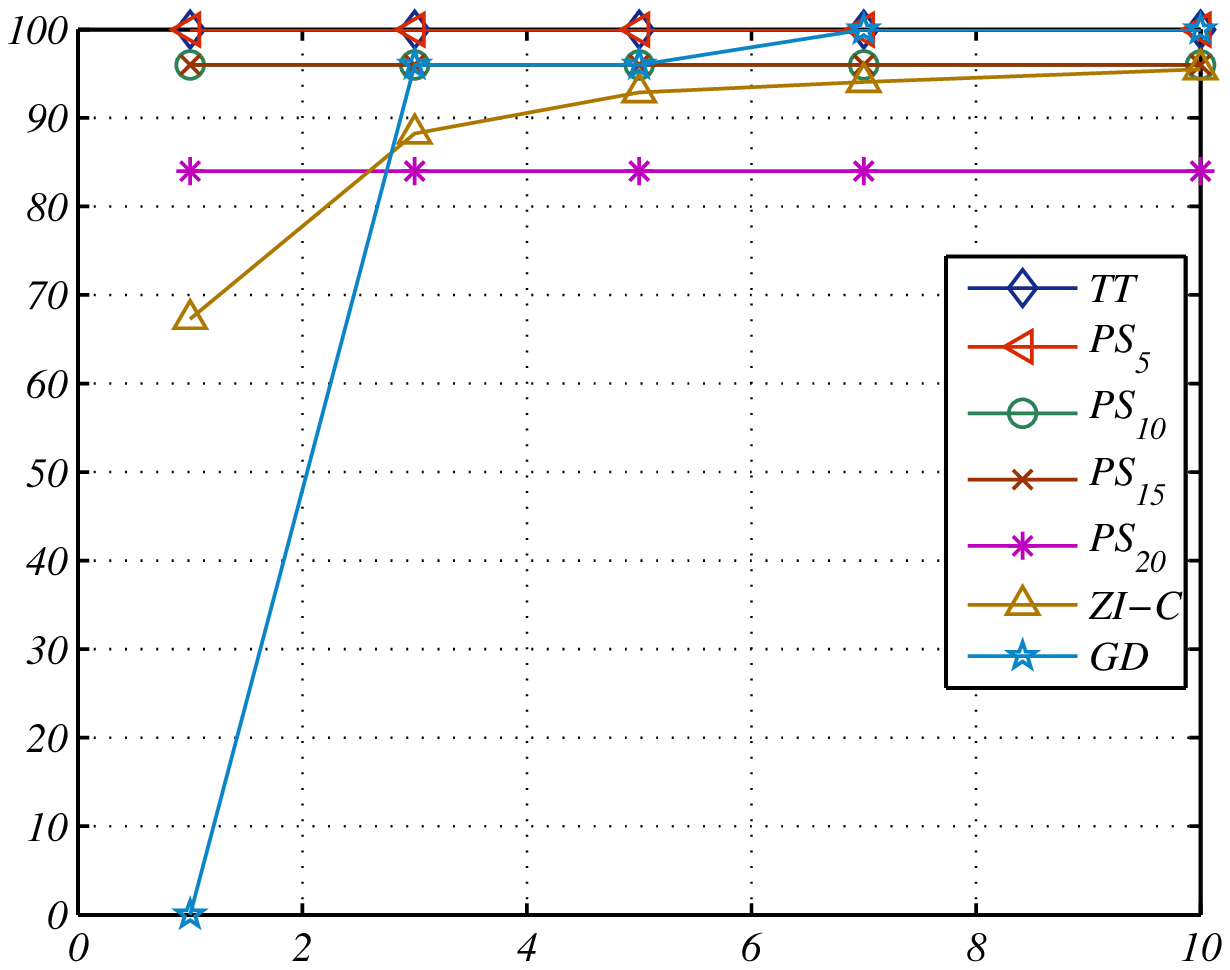}}
    \subfigure[Allocative Efficiency of \mv.]{\label{fig:multirounds-mt1.0-eff}\includegraphics[width=.32\linewidth]{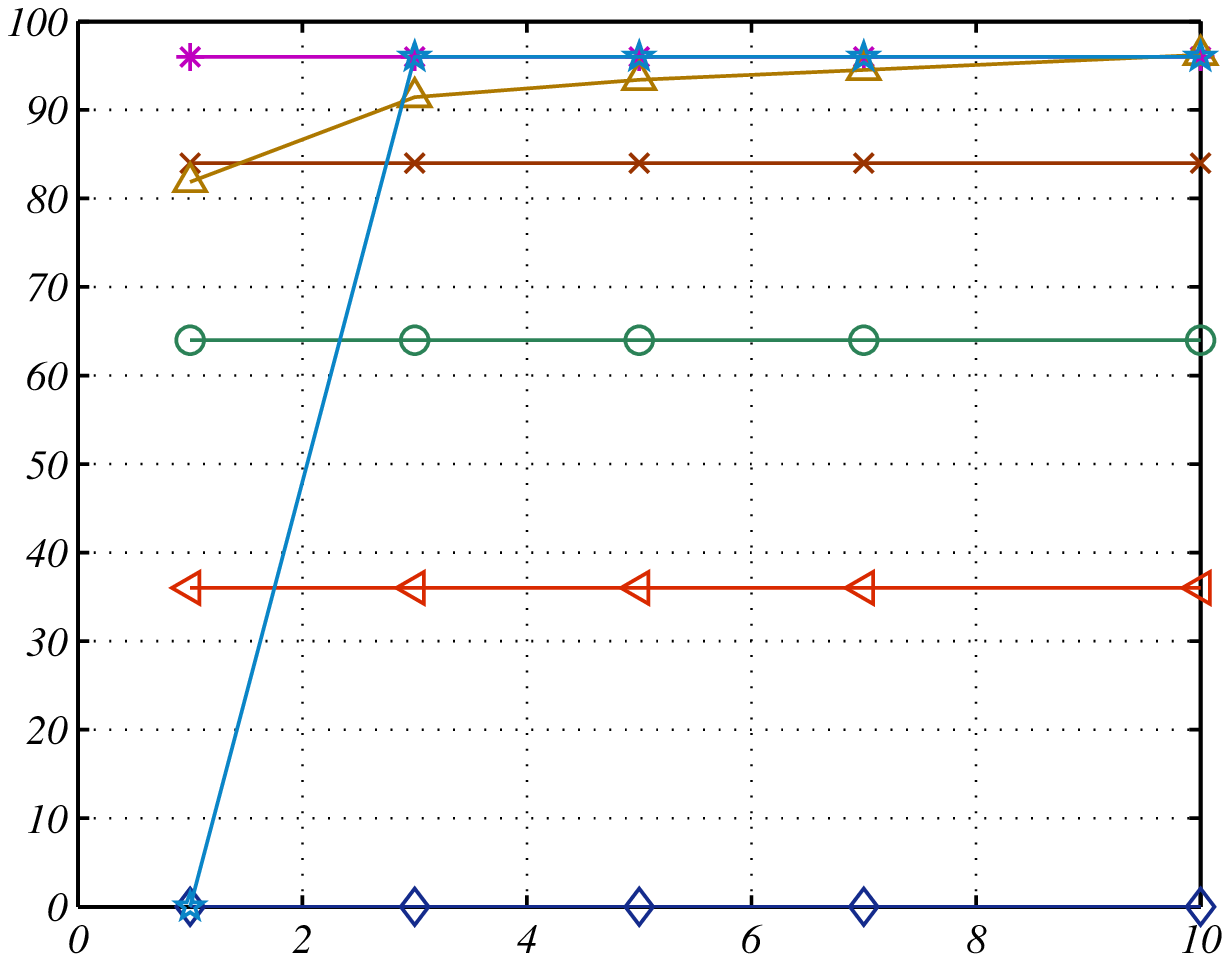}}
  }
\caption{Markets running multiple rounds per day. The $x$ axis gives the number of rounds and the $y$ axis gives the metrics of concern. }\label{fig:multirounds}
\end{center}
\end{figure*}

The most obvious change, both in trading volume and efficiency, is that \tgd, now given the chance to learn, greatly improves in performance. This of course is what we would expect from an effective trading strategy, and \tgd is one of the highest performing strategies known. We also see the well known effect of \tzic improving its performance over time despite the fact that it does not learn. (This was first reported in \cite{gode-sunder-93-jpe-zi} and explained by \cite{cliff-97-tr-zip} --- traders who do not trade in early rounds are given more chances later to pick a more competitive offer and be matched, and the shape of the supply and demand curves means that traders who trade later are more likely to be more profitable.)

However, the most important result from our perspective, is that the \mv market generates higher trading volumes than the \ch and the \cda across all the trader types that we tested. So, it can be argued that the \mv algorithm, despite being designed by considering a single round of offers and unchanging supply and demand curves, holds up when the supply and demand curves change as a result of trades made in previous rounds (since we are looking at the situation within one day, trade entitlements are not renewed). Furthermore, for certain traders --- the zero intelligence traders \tzic and the sophisticated \tgd traders --- the \mv market can achieve the same high levels of allocative efficiency that are obtainable in the \cda and the \ch. Together, these two results suggest that the \mv algorithm has the potential to be adopted in real-world markets. This is because it balances the objectives of the main parties in those markets, providing high efficiency, which is what traders want, while pushing up volume and thus allowing those who operate the markets to increase their profits.

\section{Summary}
\label{sec:summary}

This paper describes our work considering double auction markets from the perspective of maximizing the volume of trade. We developed some properties that are desirable for an algorithm that provides maximal-volume matching, and then developed an algorithm, that we call \mv, which was proved to have exactly those properties. We then evaluated \mv experimentally, showing that it does indeed increase trading volume, and that, as predicted, this increase comes at a cost in terms of allocative efficiency. However, further investigation also showed that this cost is largely overcome in markets that involve homogeneous populations of traders using two common trading strategies, \tzic and \tgd. Our approach outperforms two previous approaches to maximal-volume matching, being more efficient than \cite{zhao-ai2010-mm}, and, unlike \cite{rich-aer98-matching}, having the property of fairness.

\end{document}